\newif\ifarxiv
\newcommand{\Ex}{\mathbb{E}}
\newcommand{\LR}{\Lambda}
\newcommand{\boost}{\mathrm{boost}}
\newtheorem{theorem}{Theorem}
\newtheorem{proposition}[theorem]{Proposition}
\newtheorem{remark}{Remark}
\newtheorem{example}{Example}
\title{Improving Wald's (approximate) sequential probability ratio test by avoiding overshoot}
\author{%
  Lasse Fischer and Aaditya Ramdas\\
   University of Bremen and Carnegie Mellon University \\
  \texttt{fischer1@uni-bremen.de,  aramdas@cmu.edu}
}
\begin{document}

\maketitle
\begin{abstract}
Wald's sequential probability ratio test (SPRT) is a cornerstone of sequential analysis. Based on desired type-I, II error levels $\alpha, \beta$, it stops when the likelihood ratio crosses certain  thresholds, guaranteeing optimality of the expected sample size. However, these thresholds are not closed form and the test is often applied with approximate thresholds $(1-\beta)/\alpha$ and $\beta/(1-\alpha)$ (approximate SPRT). When $\beta > 0$, this neither guarantees error control at $\alpha,\beta$ nor optimality. 
When $\beta=0$ (power-one SPRT), this method is conservative and not optimal. The looseness in both cases is caused by \emph{overshoot}: the test statistic overshoots the thresholds at the stopping time. Numerically calculating thresholds may be infeasible, and most software packages do not do this.
We improve the approximate SPRT by modifying the test statistic to avoid overshoot. Our `sequential boosting' technique \emph{uniformly} improves power-one SPRTs $(\beta=0)$ for simple nulls and alternatives, or for one-sided nulls and alternatives in exponential families. When $\beta > 0$, our techniques
provide guaranteed error control at $\alpha,\beta$, while needing less samples than the approximate SPRT in our simulations. We also provide several nontrivial extensions: confidence sequences, sampling without replacement and conformal martingales. 

\end{abstract}


\section{Introduction}

To begin with a simple setting that we relax significantly later on, suppose we observe a stream of i.i.d. data $X_1,X_2,\ldots$ and are interested in a sequential test for
\begin{align}
H_0: X_i\sim \mathbb{P}_0 \quad \text{vs.} \quad H_1: X_i\sim \mathbb{P}_1. \label{eq:simple_test}
\end{align}
In particular, suppose we seek to control the type I and II error at level $\alpha,\beta$ for some predefined $\alpha,\beta \in [0,1]$.
The classical approach to this testing problem is the sequential probability ratio test (SPRT) introduced by \citet{wald1945sequential, Wald1947}. Define the likelihood ratio 
 \begin{align}
 \LR_0:=1 \quad \text{and} \quad \LR_t:=\frac{\prod_{i=1}^t p_1(X_i)}{\prod_{i=1}^t p_0(X_i)} \ (t\in \mathbb{N}), \label{eq:likelihood_ratio}
 \end{align}
 where $p_0$ and $p_1$ are the densities with respect to some measure $\mu$ of $ \mathbb{P}_0$ and $ \mathbb{P}_1$, respectively. For some chosen constants $0\leq \gamma_0 < \gamma_1$, at each time $t$, Wald's SPRT does the following: 
\begin{align*}
    &\text{ stop and reject $H_0$ if }  \LR_t\geq \gamma_1,\\
    &\text{ stop and accept $H_0$ if } 
    \LR_t\leq \gamma_0, \\
    &\text{ continue sampling if } 
    \gamma_0<\LR_t<\gamma_1.
\end{align*}
 
Let $\delta_{\gamma_0,\gamma_1} \in \{0,1\}$ be the test decision of the SPRT (1 if $H_0$ is rejected and 0 if accepted) and $\tau_{\gamma_0,\gamma_1}$ the stopping time. \citet{wald1944cumulative} showed that $\tau_{\gamma_0,\gamma_1}$ is almost surely finite. Furthermore, \citet{wald1948optimum} proved that for any other sequential test $(\delta', \tau')$ it holds that \begin{align} \mathbb{P}_h(\delta'=1-h)\leq \mathbb{P}_h(\delta_{\gamma_0,\gamma_1}=1-h) \ (h\in  \{0,1\}) \implies \mathbb{E}_h(\tau')\geq \mathbb{E}_h(\tau_{\gamma_0,\gamma_1}) \ (h\in  \{0,1\}).\label{eq:wald_wolfo}\end{align} 
Hence, the SPRT minimizes the expected sample size among all other sequential tests with the same or smaller type I and type II error. While this may seem to be the end of the story, the catch is that it is difficult to specify $\gamma_0,\gamma_1$ so that the error probabilities match desired levels. We expand on this below.

\paragraph{Wald's approximate SPRT}
In practice, it may be quite difficult to calculate $\mathbb{P}_0(\delta_{\gamma_0,\gamma_1}=1)$ and $\mathbb{P}_1(\delta_{\gamma_0,\gamma_1}=0)$ exactly \citep{wald1945sequential, siegmund2013sequential}. Therefore, we cannot choose $\gamma_0$ and $\gamma_1$ such that the type I error is equal to a desired level $\alpha\in (0,1)$ and the type II error is equal to a desired level $\beta \in (0,1)$.  For this reason, \citet{wald1945sequential, Wald1947} introduced approximations for $\gamma_1$ and $\gamma_0$:
\begin{align}\gamma_1=\frac{1-\beta}{\alpha} \text{ and }\gamma_0=\frac{\beta}{1-\alpha}.\label{eq:Walds_approx}\end{align}
\citet{wald1945sequential} showed that at least one of the two errors will be controlled at the desired level:
\begin{align}
\mathbb{P}_0(\delta_{\gamma_0,\gamma_1}=1)+\mathbb{P}_1(\delta_{\gamma_0,\gamma_1}=0)\leq \alpha + \beta. \label{eq:inequality_Wald}
\end{align}
 Nevertheless, in many trials it is required that at least the type I error is controlled at the prespecified level $\alpha$. Furthermore, even if the sum of type I and type II errors are controlled, it is not ensured that the SPRT with Wald's approximation minimize the expected sample size, meaning more efficient procedures are possible. The reason for the inaccuracy of Wald's approximation is that it is derived under the assumption of no \emph{overshoot}, meaning that at the stopping time $\tau_{\gamma_0,\gamma_1}$ the likelihood ratio equals exactly one of the boundaries $\gamma_0$ or $\gamma_1$. 
 
 To summarize, \textit{the thresholds in~\eqref{eq:Walds_approx} neither guarantee type I, II error control nor optimal use of sample size, and yet they are used very frequently in practice} (e.g. \citep{spiegelhalter2003risk, ivan2004application, chetouani2014sequential, hapfelmeier2023efficient}). In particular, the \texttt{R} packages \enquote{sprtt: Sequential Probability Ratio Tests Toolbox} \citep{steinhilber2023} and \enquote{SPRT: Wald's Sequential Probability Ratio Test} \citep{Bottine2015} apply these approximate boundaries (the R Package \enquote{SPRT} was archived and removed from CRAN in 2022). 

\paragraph{The conservative power-one SPRT} 
If we set $\beta=0$ in Wald's approximated thresholds \eqref{eq:Walds_approx}, then the SPRT stops for rejection if $\LR_t\geq 1/\alpha$, but never stops for accepting $H_0$. Interestingly, the SPRT provably controls the type I error at level $\alpha$ in this case. This follows, for example, by Ville's inequality which we will recap in Section~\ref{sec:avoid_overshoot}. In addition, it is still guaranteed that the SPRT stops almost surely after a finite number of samples if $H_1$ is true \citep{wald1944cumulative}, and therefore is called conservative power-one SPRT, in particular studied and espoused by Robbins~\cite{darling1968some}.
A level-$\alpha$ sequential test of power one (in short, a \emph{power-one sequential test}) is defined by a stopping time $\tau$ (at which you reject the null) such that
\begin{align}
\mathbb{P}_0(\tau<\infty)\leq \alpha \label{eq:power-one_0} \text{ and } \mathbb{P}_1(\tau<\infty)=1,
\end{align}
where $\alpha\in (0,1)$ is a prespecified significance level.
The power-one SPRT $\tau_{\gamma_1}:=\inf\{t\in \mathbb{N}: \LR_t\geq \gamma_1\}$, $\gamma_1\geq 1$, uses the available samples optimally in the following sense \citep{chow1971siegmund}:
\begin{align*}
\mathbb{P}_0(\tau'<\infty) \leq \mathbb{P}_0(\tau_{\gamma_1}<\infty) \implies \mathbb{E}_1(\tau')\geq \mathbb{E}_1(\tau_{\gamma_1}).
\end{align*}
Consequently, any other sequential test that controls the type I error at the same level the SPRT does, needs on average at least as many samples as the SPRT to make a rejection if the alternative is true. 
Since it may be difficult to calculate $\mathbb{P}_0(\tau_{\gamma_1}<\infty)$ exactly, one often employs the conservative power-one SPRT by setting $\gamma_1=1/\alpha$ to ensure \eqref{eq:power-one_0}. But this only guarantees exact type I error control and thus optimality if $(\LR_t)_{t\in \mathbb{N}}$ never overshoots at $\gamma_1$ and almost surely attains $0$ or $\gamma_1$ at some point \citep{ramdas2020admissible}. However, this is usually not fulfilled.

\paragraph{Wald's conservative SPRT}
The aforementioned Ville's inequality also implies that the type II error for Wald's two-sided SPRT is bounded by $\gamma_0$. Therefore, the SPRT with \[\gamma_1=1/\alpha \text{ and } \gamma_0=\beta\] 
 provides valid type I and type II error control at $\alpha$ and $\beta$ respectively, which is why we refer to it as Wald's conservative SPRT in the following. 

\paragraph{Outline of the paper and our contributions}

In this paper, we propose a general \emph{sequential boosting method} to uniformly improve (super)martingale based sequential tests by avoiding an overshoot at $1/\alpha$ (Section~\ref{sec:avoid_overshoot}). This means that our boosted sequential tests are never worse than the initial tests, but may stop earlier and are more likely to achieve a rejection.
In Section~\ref{sec:SPRT}, we use this to uniformly improve conservative power-one SPRTs for simple and one-sided nulls and alternatives. We provide nontrivial extensions and applications of our method in Section~\ref{sec:extensions}, improving some existing confidence sequences, methods for sampling without replacement and conformal martingales. In Section~\ref{sec:futility}, we demonstrate how our boosting approach should be applied if a stop for futility is to be included and type II error control is desired. Finally, we discuss extensions of our method and existing approaches for dealing with overshoot when using the SPRT (Section~\ref{sec:discussion}). All formal proofs are provided in the supplementary material.

\begin{table}[h!]
    \centering
    \begin{tabular}{c c c}
      Sequential test & Error control & Sample size  \\ \hline
        Conservative power-one SPRT 
        & $\alpha'\leq\alpha, \beta'=0$ & Not optimal  \\[0.2cm]
       Our boosted power-one SPRT & $\alpha'\leq \alpha, \beta'= 0$ & \makecell{Not optimal, but never stops \\ later than power-one SPRT}  \\
       \hline
        Wald's approximate SPRT 
        & \makecell{$\alpha'\leq\frac{\alpha}{1-\beta}, \beta'\leq\frac{\beta}{1-\alpha}$ \\ $\alpha'+ \beta'\leq \alpha + \beta$} & Not optimal  \\[0.6cm]
        Wald's conservative SPRT 
        & $\alpha'\leq \alpha, \beta'\leq \beta$ & Not optimal  \\[0.2cm]
       Our boosted two-sided SPRT & $\alpha'\leq \alpha, \beta'\leq \beta$ & \makecell{Not optimal, but often stops \\ earlier than approx.\ SPRT and \\ never later than cons.\ SPRT}  \\
        \hline
    \end{tabular}
    \caption{Comparison of the guarantees provided by Wald's approximate and conservative SPRTs, to our boosted SPRTs. The desired type I and type II error probabilities are denoted by $\alpha$ and $\beta$, while the actually achieved probabilities are denoted by $\alpha'$ and $\beta'$. 
    }
    \label{tab:summary}
\end{table}

In Table~\ref{tab:summary} we compare Wald's SPRTs with our boosted versions. The type I error is denoted by $\alpha'$ and the type II error by $\beta'$. In the power-one case, our SPRT provides the same guarantees as Wald's SPRT, but uniformly improves it in terms of sample size. For the two-sided case, Wald's conservative SPRT and our boosted SPRT provide valid type I and type II error control while Wald's approximate SPRT does not. In addition, our two-sided SPRT outperformed Wald's approximate SPRT with respect to sample size in all considered simulation scenarios and provably improves Wald's conservative SPRT. 

\paragraph{Comparison to existing alternatives of Wald's approximated thresholds}
We do \emph{not} claim that our boosted tests are optimal in the sense of \eqref{eq:wald_wolfo}. Hence, if the exact parameters $\gamma_0$ and $\gamma_1$ for obtaining a desired type I and type II error guarantee with the SPRT are known, one should employ the usual SPRT with these thresholds. For example, this can be the case in settings where the (asymptotic) approximations by \citet{siegmund2013sequential} are reliable (see Section~\ref{sec:discussion} for a more detailed discussion of that approach). However, we will see that our approach can be applied in a much broader range of settings than Siegmund's approximations, including in some nonparametric settings (like testing exchangeability using conformal martingales), and provides non-asymptotic control of the type I and type II error. Further advantages over Siegmund's and other approaches, such as the use of simulations to determine the exact thresholds, are discussed in Section~\ref{sec:other_approachs}.

\section{Improving test martingales by avoiding overshoot\label{sec:avoid_overshoot}}

A process $(M_t)_{t\in \mathbb{N}_0}$ adapted to some filtration $(\mathcal{F}_t)_{t\in \mathbb{N}_0}$ (in the following we denote processes and filtrations just by $(M_t)$ and $(\mathcal{F})$, respectively) is a test 
martingale \citep{vovk2005algorithmic} for $\mathbb{P}_0$ if $(M_t)$ is nonnegative, $\Ex_0[M_0]=1$ and $\Ex_0[M_t|\mathcal{F}_{t-1}]=M_{t-1}$.  We will also speak of test supermartingale, if the last equality is replaced by an inequality. Ville's inequality \citep{ville1939etude} states that for any test supermartingale $(M_t)$ it holds that
 \begin{align}
 \mathbb{P}_0(\exists t\in \mathbb{N}: M_t \geq 1/\alpha)\leq \alpha\quad (\alpha\in (0,1)).\label{eq:ville}
 \end{align}
 For example, the likelihood ratio process $(\Lambda_t)$ \eqref{eq:likelihood_ratio} is a test martingale, since 
 \begin{align}\mathbb{E}_{0}\left[\Lambda_t|\Lambda_{t-1}\right]=\Lambda_{t-1}\mathbb{E}_{0}\left[\frac{p_{1}(X_t)}{p_{0}(X_t)}\right] = \Lambda_{t-1}\int_{\mathcal{X}} \frac{p_{1}(x)}{p_{0}(x)} p_{0}(x) \mu(dx)= \Lambda_{t-1},\label{eq:LR_test_mart}\end{align}
where $\mathcal{X}$ is the support of $X_t$ under $\mathbb{P}_0$ (one could set $\mathcal{F}_{t}=\sigma(X_1,\ldots,X_t)$, but $\Lambda_t$ depends on $\mathcal{F}_{t-1}$ only through $\Lambda_{t-1}$ which is why we simply condition on $\Lambda_{t-1}$). This proves the type I error control of the conservative power-one SPRT ($\gamma_1=1/\alpha$,  $\gamma_0=0$). However, Ville's inequality holds with equality only if $(M_t)$ never overshoots at $1/\alpha$ \citep{ramdas2020admissible} and therefore this power-one SPRT usually does not exhaust the type I error. We now introduce a general approach to uniformly improve (super)martingale based sequential tests by avoiding overshoot.

\subsection{Truncation and boosting of test supermartingales}

Suppose we have some nonnegative process $(M_t)$ with $\mathbb{E}_0[M_0]=1$ that is adapted to a filtration $(\mathcal{F}_t)$. For example, one can think of $\mathcal{F}_t=\sigma(X_1,\ldots,X_t)$ as being generated by the observations, however, in some cases it is useful to coarsen the filtration or add randomization, as we shall later see with conformal martingales. It should be noted that in this section we do not make any assumption about the distribution of the data (in particular, $X_1,X_2,\ldots$ do not need to be i.i.d.). We define $L_t=M_t/M_{t-1}$ with the convention $0/0=0$ as the individual multiplicative factors of $(M_t)$, so that
\[
M_t = M_{t-1}\cdot L_t = \prod_{i=1}^t L_i\quad (t\in \mathbb{N}).
\]
Then $(M_t)$ is a test supermartingale with respect to $(\mathcal{F}_t)$ iff \begin{align}\mathbb{E}_{0}[L_t|\mathcal{F}_{t-1}]\leq 1\quad  \text{for all } t\in \mathbb{N}.\label{eq:growth_factor}\end{align}
Hence, every test supermartingale can be decomposed into its individual factors. Furthermore, if factors $(L_t)_{t\in \mathbb{N}}$ adapted to $(\mathcal{F}_t)$ satisfying \eqref{eq:growth_factor} are given, we can construct a new test supermartingale by $M_t=\prod_{i=1}^t L_i$. 
Defining the stopping time $$\tau_M:=\inf\{t\in \mathbb{N}_0:M_t\geq 1/\alpha\}$$ at which we reject the null, Ville's inequality implies type I error control: $\mathbb{P}_0(\tau_M < \infty) \leq \alpha$. Mathematically, a stopping time $\tau$ is an integer-valued random variable such that $\{\tau\leq t\}$ is measurable with respect to $\mathcal{F}_t$ for all $t\in \mathbb{N}$.  
 The optional stopping theorem implies that $\mathbb{E}_0[M_{\tau}]\leq 1$ for all stopping times $\tau$. 


Our approach is to increase the individual factors $L_t$ of a given test supermartingale $(M_t)$, yet avoiding an overshoot at $1/\alpha$. As a first step, we define the truncation function
\begin{align}
    T_{\alpha}(x; M)=\begin{cases}
        x, &\text{ if } Mx\leq \frac{1}{\alpha } \\
        \frac{1}{M \alpha}, &\text{ if } Mx> \frac{1}{\alpha }.
    \end{cases}
\label{eq:truncation}
\end{align}
Note the key property that $M \cdot T_{\alpha}(x; M) \leq 1/\alpha$ for all $x\in [0,\infty]$. 
Now define the process 
\begin{align}
M_t^{\mathrm{trunc}}= M_{t-1}^{\mathrm{trunc}} \cdot T_{\alpha}(L_i; M_{t-1}^{\mathrm{trunc}}) = \prod_{i=1}^t T_{\alpha}(L_i;M_{i-1}^{\mathrm{trunc}}).
\end{align}
By construction, $M_t^{\mathrm{trunc}}$ never overshoots at $1/\alpha$. 
Now, define
\[
\tau_{\mathrm{trunc}}:=\inf\{t\in \mathbb{N}_0:M_t^{\mathrm{trunc}} \geq 1/\alpha\}.
\]
We claim that $$\tau_{\mathrm{trunc}}=\tau_M,$$ meaning the sequential test based on $(M_t^{\mathrm{trunc}})$ makes the same decision and at the same time as the test based on $(M_t)$.
This is because for $t < \tau_M$, $M_t^{\mathrm{trunc}} = M_t$, meaning that the truncation has no effect if the threshold is not reached. At the very last step, when $M_t$ reaches (but may potentially exceed) $1/\alpha$, $M_t^{\mathrm{trunc}}$ simply equals $1/\alpha$, proving our claim.

However, since $T_{\alpha}(L_t;M^{\mathrm{trunc}}_{t-1})\leq L_t$, we can potentially increase the factors $T_{\alpha}(L_t;M^{\mathrm{trunc}}_{t-1})$ and thus stop sooner while retaining valid error control. We propose to improve $L_t$ by multiplying it with a \enquote{boosting factor} $b_t\geq 1$ (the larger, the better): at each step $t$, we calculate $b_t$ as large as possible such that 
\begin{align}\Ex_{0}[T_{\alpha}(b_t L_t;M_{t-1}^{\mathrm{boost}})|\mathcal{F}_{t-1}]\leq 1,\label{eq:boosting_inequality}\end{align}
where $M_{t-1}^{\mathrm{boost}}=\prod_{i=1}^{t-1} T_{\alpha}(b_iL_i; M_{i-1}^{\mathrm{boost}})$. Now, $(M_{t}^{\mathrm{boost}})$ is our boosted process that we use for testing. This procedure is summarized in Algorithm~\ref{alg:general}. Note that if $M_{t-1}^{\mathrm{boost}}=1/\alpha$, we could choose $b_t=\infty$, so we just set $M_{t}^{\mathrm{boost}}=1/\alpha$ in this case.

\begin{algorithm}
\caption{Improving test supermartingales by avoiding overshoot} \label{alg:general}
 \hspace*{\algorithmicindent} \textbf{Input:} Test supermartingale $(M_t)$ and individual significance level $\alpha$.\\
 \hspace*{\algorithmicindent} 
 \textbf{Output:} Boosted test supermartingale $(M_t^{\mathrm{boost}})$.
\begin{algorithmic}[1]
\State $M_0^{\mathrm{boost}}=1$
\For{$t=1,2,\ldots$}
  \State Define $L_t:=M_{t}/M_{t-1}$
  \State Choose $b_t\geq 1$ as large as possible such that $\Ex_{0}[T_{\alpha}(b_t L_t;M_{t-1}^{\mathrm{boost}})|\mathcal{F}_{t-1}]\leq 1$
  \State $M_t^{\mathrm{boost}}=M_{t-1}^{\mathrm{boost}} T_{\alpha}(b_t L_t;M_{t-1}^{\mathrm{boost}})$
\EndFor
\State \Return $(M_t^{\mathrm{boost}})$
\end{algorithmic}
\end{algorithm}

\begin{theorem}\label{theo:main}
    Let $(M_t)$ be any test supermartingale and $(M_t^{\mathrm{boost}})$ be the boosted process obtained by Algorithm~\ref{alg:general}, yielding the sequential tests $\tau_M=\inf\{t\in \mathbb{N}: M_t\geq 1/\alpha\}$ and $\tau_{\mathrm{boost}}=\inf\{t\in \mathbb{N}: M_t^{\mathrm{boost}}\geq 1/\alpha\}$, respectively. Then $M_t^{\mathrm{boost}}\geq M_t$  for all $t<\tau_M$ and $\tau_{\mathrm{boost}}\leq \tau_M$. Furthermore, $(M_t^{\boost})$ is a test supermartingale. Therefore, under the null,  $\mathbb{P}_0(\tau_{\mathrm{boost}}<\infty)\leq \alpha$ and $\mathbb{E}_0[M_{\tau}^{\boost}]\leq 1$ for all stopping times $\tau$. 
\end{theorem}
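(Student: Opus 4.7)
The plan is to establish the four assertions in order: (i) $(M_t^{\boost})$ is a test supermartingale; (ii) the pointwise domination $M_t^{\boost} \geq M_t$ for all $t < \tau_M$; (iii) the stopping-time inequality $\tau_{\boost} \leq \tau_M$; and (iv) the type-I error bound and the optional-stopping consequence as corollaries of (i). Before starting, one should note that Algorithm~\ref{alg:general} is well-defined: since $T_\alpha(x; M) \leq x$ for $x \geq 0$ and $(M_t)$ is a supermartingale, the choice $b_t = 1$ always satisfies the boosting constraint
\[
\mathbb{E}_0[T_\alpha(b_t L_t;\, M_{t-1}^{\boost}) \mid \mathcal{F}_{t-1}] \leq \mathbb{E}_0[L_t \mid \mathcal{F}_{t-1}] \leq 1,
\]
so a valid, $\mathcal{F}_{t-1}$-measurable $b_t \geq 1$ always exists.

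Step (i) is then immediate: $M_0^{\boost} = 1$ by construction, $T_\alpha$ takes nonnegative values, and
\[
\mathbb{E}_0[M_t^{\boost} \mid \mathcal{F}_{t-1}] = M_{t-1}^{\boost} \cdot \mathbb{E}_0[T_\alpha(b_t L_t;\, M_{t-1}^{\boost}) \mid \mathcal{F}_{t-1}] \leq M_{t-1}^{\boost}
\]
by the choice of $b_t$. Step (iv) follows at once from (i): Ville's inequality \eqref{eq:ville} yields $\mathbb{P}_0(\tau_{\boost} < \infty) \leq \alpha$, and the optional stopping theorem for nonnegative supermartingales yields $\mathbb{E}_0[M_\tau^{\boost}] \leq 1$ for every stopping time $\tau$.

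For (ii) and (iii) the key observation is the compact identity
\[
M \cdot T_\alpha(x; M) = \min(Mx,\, 1/\alpha),
\qquad\text{so}\qquad
M_t^{\boost} = \min\!\bigl(M_{t-1}^{\boost}\, b_t L_t,\ 1/\alpha\bigr).
\]
I prove by induction on $t$ the slightly stronger statement: $M_t^{\boost} \geq M_t$ for every $t < \tau_M$, and $M_{\tau_M}^{\boost} \geq 1/\alpha$. The base case $M_0^{\boost} = 1 = M_0$ is trivial. For the inductive step, fix $t \leq \tau_M$; the hypothesis gives $M_{t-1}^{\boost} \geq M_{t-1}$, and since $t - 1 < \tau_M$ we have $M_{t-1} < 1/\alpha$. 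Split on whether the argument of the min is large: if $M_{t-1}^{\boost}\, b_t L_t \geq 1/\alpha$, then $M_t^{\boost} = 1/\alpha$, which exceeds $M_t$ when $t < \tau_M$ (as $M_t < 1/\alpha$) and matches the threshold when $t = \tau_M$; otherwise
\[
M_t^{\boost} = M_{t-1}^{\boost}\, b_t L_t \geq M_{t-1}\, L_t = M_t,
\]
using $b_t \geq 1$ and the inductive hypothesis, and at $t = \tau_M$ this chain also gives $M_{\tau_M}^{\boost} \geq M_{\tau_M} \geq 1/\alpha$. In either branch, $M_{\tau_M}^{\boost} \geq 1/\alpha$, proving $\tau_{\boost} \leq \tau_M$.

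The only mildly delicate point is tracking the induction cleanly through the two truncation branches and the boundary step at $t = \tau_M$; beyond that, everything is forced by the defining property of $T_\alpha$ (so that $M \cdot T_\alpha(x; M) \leq 1/\alpha$) and by the choice of $b_t$ (so that the supermartingale property is preserved). I do not anticipate any hidden obstacle.
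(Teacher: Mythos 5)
Your proof is correct and takes essentially the same approach as the paper's: an inductive pathwise comparison establishing $M_t^{\boost}\geq M_t$ for $t<\tau_M$ and $M_{\tau_M}^{\boost}\geq 1/\alpha$, combined with the observation that the constraint on $b_t$ preserves the supermartingale property so that Ville's inequality and optional stopping apply. The only cosmetic difference is that the paper routes the comparison through the auxiliary truncated process $(M_t^{\mathrm{trunc}})$ (using $\tau_{\mathrm{trunc}}=\tau_M$ and $M_t^{\boost}\geq M_t^{\mathrm{trunc}}$), whereas you carry out the induction directly via the identity $M_t^{\boost}=\min(M_{t-1}^{\boost}b_tL_t,\,1/\alpha)$; these amount to the same argument.
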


Theorem~\ref{theo:main} shows that Algorithm~\ref{alg:general} provides a universal improvement of test supermartingales if $\alpha$ is prespecified. Regardless of which test supermartingale $(M_t)$ is put into Algorithm~\ref{alg:general}, the test based on the output process $(M_t^{\boost})$ always stops sooner ($\tau_{\boost} \leq \tau_M$). Further, if we stop before the null is rejected, the boosted test martingale provides more evidence against the null  than the initial process ($M_t^{\boost}\geq M_t$ for all $t<\tau_M$).

\begin{remark}
 The individual factors $L_t$ have sometimes been termed `sequential e-values'  in the recent literature \cite{ramdas2023game, vovk2024merging}. A similar, but non-sequential, truncation approach like \eqref{eq:truncation} was proposed by \citet{wang2022false} in the context of multiple testing. They also mentioned that in general we can use any \enquote{boosting function} $\psi_t:[0,\infty]\to [0,\infty]$ such that
 \begin{align}\mathbb{E}_{0}[T_t(\psi_t(L_t);M_{t-1}^{\boost})|\mathcal{F}_{t-1}]\leq 1 \quad \text{for all } t\in \mathbb{N}.\end{align}
 However, in a different context, parallel work by \citet{koning2024continuous}  proved that $\psi_t(x)=b_tx$, where $b_t$ is chosen such that \eqref{eq:boosting_inequality} is satisfied with an equality, is the log-optimal choice. Therefore, we will stick to this particular boosting factor for the rest of the paper.
\end{remark}

\begin{remark}
   $(M_t^{\mathrm{boost}})$ is a test martingale if \eqref{eq:boosting_inequality} holds with equality.
    Therefore, $(M_t^{\mathrm{boost}})$ can be interpreted as a likelihood ratio between $\mathbb{Q}$  and $\mathbb{P}_0$, where $\mathbb{Q}$ is adaptively determined. 
\end{remark}

\subsection{Existence and calculation of optimal boosting factors}

One question is, of course, if there is a largest possible boosting factor $b_t$ and, if so, how we can determine it. In the following proposition, we show that there always exists a $b_t^*$ that satisfies \eqref{eq:boosting_inequality} with an equality and that if we have found such a $b_t^*$, then there is no further $b_t$ which leads to a larger boosted martingale factor. 

\begin{proposition}\label{prop:solvability}
    If $\mathbb{P}_0(L_t=0|\mathcal{F}_{t-1})=0$ and $M_{t-1}^{\boost}\in (0,\alpha^{-1})$, then there exists $b_t^*$ s.t. 
    \begin{align}\Ex_{0}[T_{\alpha}(b_t^* L_t;M_{t-1}^{\mathrm{boost}})|\mathcal{F}_{t-1}]= 1.\label{eq:boosting_equality}\end{align}
    For every $b_t\geq 1$ satisfying \eqref{eq:boosting_inequality}, we have $T_{\alpha}(b_t^* L_t;M_{t-1}^{\mathrm{boost}})\geq T_{\alpha}(b_t L_t;M_{t-1}^{\mathrm{boost}})$, $\mathbb{P}_0$-a.s.
\end{proposition}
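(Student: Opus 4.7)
The plan is to write the conditional expectation as a function of $b$ and apply the intermediate value theorem, exploiting continuity and monotonicity of the truncation function. Throughout, abbreviate $M := M_{t-1}^{\boost}$ and define
\[
g(b) \;:=\; \Ex_0\bigl[T_\alpha(b L_t ; M)\,\big|\,\mathcal{F}_{t-1}\bigr] \qquad (b\in[0,\infty)).
\]
All statements below are understood conditionally on $\mathcal{F}_{t-1}$; since $M$ is $\mathcal{F}_{t-1}$-measurable, $M$ is treated as a constant in $(0,\alpha^{-1})$.

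First I would establish the analytic properties of $g$. The map $x \mapsto T_\alpha(x;M)$ is non-decreasing, continuous, and bounded above by $1/(M\alpha)$. Monotonicity of $T_\alpha(\cdot;M)$ immediately yields that $g$ is non-decreasing in $b$, and continuity of $g$ follows from the dominated convergence theorem with dominating constant $1/(M\alpha)$. The boundary values are:
\begin{itemize}
    \item $g(0)=0$, since $T_\alpha(0;M)=0$.
    \item As $b\to\infty$, for every $\omega$ with $L_t(\omega)>0$, eventually $b L_t(\omega) > 1/(M\alpha)$ so $T_\alpha(b L_t(\omega); M) \to 1/(M\alpha)$. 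The hypothesis $\mathbb{P}_0(L_t=0\mid\mathcal{F}_{t-1})=0$ ensures this convergence holds a.s., and another application of DCT gives $g(b) \to 1/(M\alpha)$.
\end{itemize}
Because $M<\alpha^{-1}$, the limiting value $1/(M\alpha)$ strictly exceeds $1$. Continuity together with $g(0)=0 < 1 < \lim_{b\to\infty} g(b)$ yields, via the intermediate value theorem, some $b_t^*\in(0,\infty)$ with $g(b_t^*)=1$, which is \eqref{eq:boosting_equality}. (Incidentally, $b_t^*\ge 1$ because $g(1)=\Ex_0[T_\alpha(L_t;M)\mid\mathcal{F}_{t-1}]\leq \Ex_0[L_t\mid\mathcal{F}_{t-1}]\leq 1$ by the supermartingale property and the fact that $T_\alpha(x;M)\leq x$.)

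For the second claim, I would fix any $b_t\ge 1$ with $g(b_t)\le 1$ and split on whether $b_t\le b_t^*$ or $b_t>b_t^*$. If $b_t\le b_t^*$, monotonicity of $T_\alpha(\cdot;M)$ gives $T_\alpha(b_tL_t;M)\le T_\alpha(b_t^*L_t;M)$ pointwise, hence $\mathbb{P}_0$-a.s. If $b_t>b_t^*$, then by monotonicity of $g$ we have $1=g(b_t^*)\le g(b_t)\le 1$, so $g(b_t)=g(b_t^*)$; the non-negative random variable $T_\alpha(b_tL_t;M)-T_\alpha(b_t^*L_t;M)$ then has conditional expectation zero and must vanish $\mathbb{P}_0$-a.s., giving the desired (in fact, equality).

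The only subtle point is the limit $g(b)\to 1/(M\alpha)$: without the assumption $\mathbb{P}_0(L_t=0\mid\mathcal{F}_{t-1})=0$, the set $\{L_t=0\}$ would contribute $0$ in the limit and one might only reach a limit value $\le 1$, breaking the IVT step. Everything else is a direct application of DCT, monotonicity, and the standard "non-negative random variable with zero mean is a.s. zero" fact; I do not anticipate further obstacles.
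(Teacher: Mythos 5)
Your proposal is correct and takes essentially the same route as the paper's proof: dominated convergence (with dominating constant $1/(M\alpha)$) to get continuity of $b\mapsto g(b)$, the limit $g(b)\to 1/(M\alpha)>1$ using $\mathbb{P}_0(L_t=0\mid\mathcal{F}_{t-1})=0$, the intermediate value theorem to produce $b_t^*$, and the same case split ($b_t\le b_t^*$ vs.\ $b_t>b_t^*$) with the ``nonnegative with zero conditional mean is a.s.\ zero'' argument for the second claim. The only cosmetic difference is that you anchor the IVT at $g(0)=0$ (separately checking $b_t^*\ge 1$ via the supermartingale property), whereas the paper anchors it at $g(1)\le 1$, but these are interchangeable.
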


 Since $T_{\alpha}(x ; M_{t-1}^{\boost})$ is nondecreasing in $x$, it is numerically very easy to find $b_t^*$;  start with $b_t= 1$ and slightly increase it until \eqref{eq:boosting_equality} is satisfied with the desired accuracy. This can be solved accurately with standard root solving algorithms (we use the \texttt{R} function \textit{uniroot} in our implementations). The only requirement is that we know the distribution of $L_t|\mathcal{F}_{t-1}$ under $\mathbb{P}_0$ such that we can calculate the expected value in \eqref{eq:boosting_equality} for arbitrary $b_t$. In the following section we provide a closed form for this expected value for SPRTs. In some cases (e.g., for binary data), it is even possible to obtain a closed form for $b_t^*$ (see Section~\ref{sec:WoR}).

\section{Improving power-one SPRTs\label{sec:SPRT}}

In this section, we demonstrate how (conservative) power-one SPRTs can be improved by the boosting technique proposed in Algorithm~\ref{alg:general}. Hence, we consider the setting from the introduction and assume that $X_1,X_2,\ldots$ are i.i.d. data. SPRTs are particularly well-suited for our method, since each individual factor $L_t$ has the simple form $L_t=\lambda(X_t)$, where $\lambda$ is the likelihood ratio between the null and the alternative. In addition, the distribution of $\lambda(X_t)$ under the null hypothesis is usually known.

Although only required in Subsection~\ref{sec:comp_null}, to unify the notation we assume that the distribution class under consideration can be parametrized by some parameter space $\Theta$. The true but unknown parameter of interest will be denoted by $\theta^*\in \Theta$.

\subsection{Simple null vs. simple alternative\label{sec:simple_null_alt}}
 In this subsection, we consider the simple testing problem
$$
H_0:\theta^*=\theta_0 \quad \text{vs.} \quad H_1:\theta^*=\theta_1,\ \theta_1\neq \theta_0.
$$

In this case, the likelihood ratio is given by $\lambda(X_t)=\frac{p_{\theta_1}(X_t)}{p_{\theta_0}(X_t)}$. This allows to write the expected value in \eqref{eq:boosting_inequality} in the following simple form 
\begin{align}
    &\mathbb{E}_{\theta_0}[T_{\alpha}(b_t \lambda(X_t) ; \Lambda_{t-1}^{\boost})| \Lambda_{t-1}^{\boost}] \nonumber \\ &=\mathbb{E}_{\theta_0}\left[b_t \lambda(X_t)\mathbbm{1}\left\{b_t \lambda(X_t)\leq \frac{1}{\alpha \Lambda_{t-1}^{\boost}}\right\} + \frac{1}{\Lambda_{t-1}^{\boost}\alpha}\mathbbm{1}\left\{b_t \lambda(X_t)> \frac{1}{\alpha \Lambda_{t-1}^{\boost}}\right\}\Bigg\vert \Lambda_{t-1}^{\boost}\right] \nonumber \\
    &= b_t\int_{\mathcal{X}} \frac{p_{\theta_1}(x)}{p_{\theta_0}(x)} p_{\theta_0}(x) \mathbbm{1}\left\{b_t \lambda(x)\leq \frac{1}{\alpha \Lambda_{t-1}^{\boost}}\right\} \mu(dx)  + \frac{1}{\Lambda_{t-1}^{\boost}\alpha}\mathbb{P}_{\theta_0}\left( \lambda(X_t)> \frac{1}{b_t \alpha \Lambda_{t-1}^{\boost}}\bigg\vert \Lambda_{t-1}^{\boost}\right)  \nonumber \\
    &=b_t\mathbb{P}_{\theta_1}\left(\lambda(X_t)\leq \frac{1}{b_t\alpha \Lambda_{t-1}^{\boost}}\bigg\vert \Lambda_{t-1}^{\boost}\right)  + \frac{1}{\Lambda_{t-1}^{\boost}\alpha}\mathbb{P}_{\theta_0}\left( \lambda(X_t)> \frac{1}{b_t \alpha \Lambda_{t-1}^{\boost}}\bigg\vert \Lambda_{t-1}^{\boost}\right),\label{eq:boost_SPRT}
\end{align}
where $\Lambda_{t-1}^{\boost}=\prod_{i=1}^{t-1} T_{\alpha}(b_i \lambda(X_i);\Lambda_{i-1}^{\boost})$. To calculate an optimal boosting factor, we can then just set the equation equal to $1$ and solve for $b_t$ numerically (see Proposition~\ref{prop:solvability}).

For example, in case of a simple Gaussian testing problem $H_0: X_i\sim \mathcal{N}(\mu_0,1)$ vs. $H_1: X_i\sim \mathcal{N}(\mu_1,1)$, $\mu_1-\mu_0=\delta>0$, we obtain
$$
\lambda(x)=\exp(\delta (x-\mu_0)-\delta^2/2) \quad \text{and} \quad \lambda^{-1}(y)=\frac{\delta^2/2+\log(y)}{\delta}+\mu_0.
$$
Hence, \eqref{eq:boost_SPRT} becomes
\begin{align}
b_t\left[\Phi\left(\frac{\log\left([b_t\alpha \Lambda_{t-1}^{\boost}]^{-1}\right)}{\delta} - \frac{\delta}{2}\right)\right] +(\Lambda_{t-1}^{\boost}\alpha)^{-1} \left[1-\Phi\left(\frac{\log([b_t\alpha \Lambda_{t-1}^{\boost}]^{-1})}{\delta}
+ \frac{\delta}{2}\right)\right], \label{eq:Gaussian_boosting}
\end{align}
where $\Phi$ is the CDF of a standard Gaussian. 

For instance, suppose we test the above hypothesis for $\delta=2$ at level $\alpha=0.05$ and we are at the beginning of the testing process, meaning $\Lambda_{t-1}^{\boost}=\Lambda_{0}^{\boost}=1$. Then we obtain a boosting factor of $b_t=1.28$. If the current martingale value increases, the boosting factor increases as well, such that we obtain $b_t=1.55$ for  $\Lambda_{t-1}^{\boost}=2$, but $b_t=1.14$   for $\Lambda_{t-1}^{\boost}=0.5$. In a similar manner, $b_t$ is increasing in $\delta$. 

See Table~\ref{tab:boosting_factors} for boosting factors on a grid of values for $\delta$ and $\Lambda_{t-1}^{\boost}$. If $\delta$ and $\Lambda_{t-1}^{\boost}$ are both small, the boosting factors are close to one. The reason is that in this case the probability under $H_0$ of overshooting at $1/\alpha$ in the next step is close to zero. But also note that for small $\delta$ the expected sample size is usually large and this table only shows the boosting factors for one step. The total gain of boosting at some time could be calculated as the cumulative product of the boosting factors up to that time. The table also shows that the SPRT is very conservative for large $\delta$. For example, if $\delta=3$ and $\lambda(X_1)=\lambda(X_2)=1$, then $\Lambda_2=1$, but we would already have $\Lambda_2^{\boost}>20$ such that the boosted test could already reject the hypothesis. 

\begin{table}[!htb]
    \centering
    \begin{tabular}{l|c c c c c }
        \diagbox[]{$\delta$}{$\Lambda_{t-1}^{\boost}$} & $0.5$ & $1$ & $2$ & $4$ & $10$ \\ \hline
        $0.1$ & $1$ & $1$ & $1$ & $1$ & $1.00001$ \\ 
        $0.5$ & $1$ & $1$ & $1$ & $1.00019$ & $1.03019$ \\
        $1$ & $1.00015$ & $1.00157$ & $1.01077$ & $1.05386$ & $1.37349$ \\ 
        $2$ & $1.13931$ & $1.27600$ & $1.55046$ & $2.17468$ & $5.73972$ \\ 
        $3$ & $2.45490$ & $3.49439$ & $5.72975$ & $11.8255$ & $68.1985$ \\ 
    \end{tabular}
    \caption{Boosting factors obtained for $\alpha=0.05$ in a simple Gaussian testing problem with mean difference $\delta$ and variance $1$. \label{tab:boosting_factors}}
\end{table}

In Figure~\ref{fig:sim_simple} we compare the SPRT with its boosted improvement regarding sample size and type I error in the simple Gaussian testing setup for different strengths of the signal $\delta$, where $\mu_0=0$. The results were obtained by averaging over $\numprint{10000}$ trials. In each trial, we generated up to $\numprint{10000}$ observations under the alternative until the hypothesis could be rejected by the respective sequential test.  Both tests rejected the hypothesis in all cases. 
In order to determine the type I error for both of the methods, we used importance sampling by exploiting the fact \citep{siegmund1976importance} that

\begin{align}
\mathbb{P}_{\theta_0}(\tau <\infty) = \mathbb{E}_{\theta_1}\left[\mathbbm{1}\{\tau <\infty\}   \Lambda_{\tau}^{-1}\right], \label{eq:importance_sampling}
\end{align}
where $\Lambda_{\tau}^{-1}$ is the likelihood ratio of $\theta_0$ to $\theta_1$ at time $\tau$.
With this, we can estimate the type I error by the inverse of the likelihood ratio at the stopping time while sampling under the alternative. Since $\tau$ is almost surely finite and usually much smaller under the alternative than under the null hypothesis, this reduces the required number of samples to obtain a good estimate of the type I error \citep{siegmund1976importance, siegmund2013sequential}.

The boosting approach saves on average $3.4 \%$ to $11.8 \%$ of the sample size compared to the SPRT in the considered scenarios, while the percentage increases with the strength of the signal. Furthermore, the boosted SPRT exhausts the significance level, while the SPRT is becoming increasingly conservative for stronger signals. For all considered $\delta$, the mean computational time required for running one boosted SPRT, including the calculation of the corresponding boosting factors, was below $0.05$ seconds (performed on a laptop with an Intel Core i7-10510U CPU, 1.80GHz, 8 cores, 16 GB RAM), showing that the approach is computationally feasible in practice.  
In Supplementary Material~\ref{appn:sims} we repeat this experiment for $\alpha=0.01$ instead of $\alpha=0.05$, which yields a similar result.

\begin{figure}[h!]
\centering
\includegraphics[width=\textwidth]{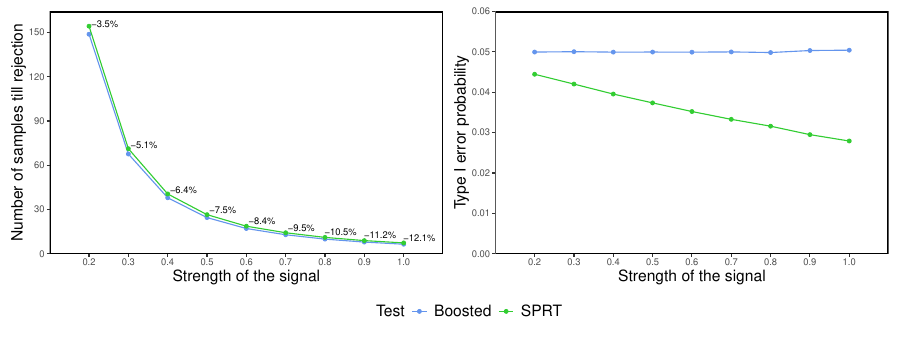}
\caption{Comparison of sample size and type I error between the SPRT and the boosted SPRT in a simple Gaussian testing setup for different signal strengths and $\alpha=0.05$. The boosted SPRT saves $3.4-11.8\%$ of sample size compared to the SPRT and exhausts the significance level. The SPRT is becoming increasingly conservative for stronger signals.  \label{fig:sim_simple} }\end{figure}

\subsection{Handling composite alternatives with predictable plugins\label{sec:comp_alt}}

In practice, a concrete alternative may be hard to specify, leading to problems of the form
$$
H_0: \theta^* = \theta_0 \quad \text{vs.} \quad H_1:\theta^* >\theta_0.
$$
There are two approaches that are usually used to handle such composite alternatives in sequential testing: the method of mixtures \citep{robbins1970statistical} and the predictable plugin \citep{wald1945sequential, robbins1973class, robbins1974expected}. In the following, we will recap the latter, since it is better suited for our boosting technique. 

Let $(\theta_t)_{t\in \mathbb{N}}$ be a predictable process, meaning $\theta_t$ is measurable with respect to $\mathcal{F}_{t-1}$,  such that $\theta_t\geq \theta_0$ for all $t\in \mathbb{N}$ and define
\begin{align}
\Lambda_t^{\text{plugin}}=\prod_{i=1}^t \frac{p_{\theta_i}(X_i)}{p_{\theta_0}(X_i)}. \label{eq:Z_t}
\end{align}
 The predictable plugin method replaces the numerator of the likelihood ratio the density of a fixed alternative by an estimated density $p_{\theta_t}$ that is based on the previous data $X_1,\ldots, X_{t-1}$.  By replacing $\Lambda_t$ with $\Lambda_t^{\text{plugin}}$ in \eqref{eq:LR_test_mart},
it follows immediately that $(\Lambda_t^{\text{plugin}})$ defines a test martingale for $H_0$. Hence, the application of Algorithm~\ref{alg:general} is straightforward: at each $t\in \mathbb{N}$, we calculate a boosting factor using \eqref{eq:boost_SPRT} by replacing $\theta_1$ with $\theta_t$.


A simple  predictable plugin is given by the smoothed maximum likelihood estimator
\begin{align}
\theta_t=\max\left(\frac{\theta_0 + \sum_{i=1}^{t-1} X_i}{t},\theta_0\right). \label{eq:plugin_robbins}
\end{align}
 In Figure~\ref{fig:sim_plugin} we repeated the experiment from Figure~\ref{fig:sim_simple} in Section~\ref{sec:simple_null_alt} but assumed that the alternative is not specified in advance using the predictable parameter in \eqref{eq:plugin_robbins}. Unsurprisingly, the sample size required by both methods is larger than when the alternative was prespecified. However, the gain in sample size due to boosting is even larger in the case of the composite alternative, saving $6.3\%$ to $13.9\%$ of the sample size compared to the usual SPRT.

\begin{figure}[h!]
\centering
\includegraphics[width=0.8\textwidth]{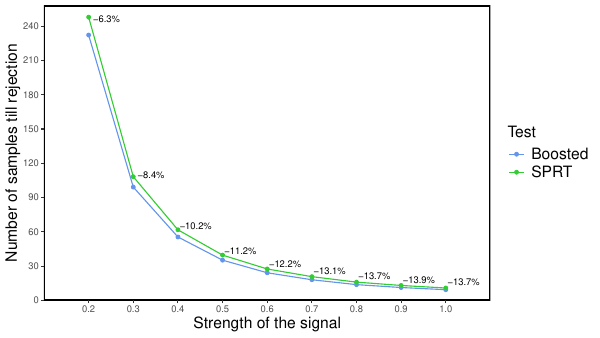}
\caption{Comparison of sample size between the SPRT with a predictable plugin and the boosted SPRT in a Gaussian testing setup with composite alternative for different signal strengths. The boosted SPRT saves $6.3-13.9\%$ of sample size compared to the SPRT.  \label{fig:sim_plugin} }\end{figure}

\subsection{Handling composite nulls in one-sided testing problems\label{sec:comp_null}}

An even more common testing setup in practice is
\begin{align}
H_0: \theta^* \leq \theta_0 \quad \text{vs.} \quad H_1:\theta^* >\theta_0 \quad (\theta_0\in \Theta), \label{eq:comp_null}
\end{align}
meaning we have a composite null hypothesis and a composite alternative. Throughout this section, we assume that for all $\theta', \theta''\in \Theta$ with $\theta''\geq \theta'$,
\begin{align}
\lambda(x)=\frac{p_{\theta''}(x)}{p_{\theta'}(x)} \text{ is nondecreasing in } x.\label{eq:monotone_LR}
\end{align}
\citet{robbins1973class} showed that under this assumption the process $(\Lambda_t^{\textrm{plugin}})$ defined in Equation~\eqref{eq:Z_t} also provides a valid sequential test for the composite null hypothesis above. In the following, we show that the same holds for our boosted process.

\begin{proposition}\label{prop:one_sided_null}
    Consider the one-sided testing problem \eqref{eq:comp_null} under the monotone likelihood ratio property \eqref{eq:monotone_LR} and let $\lambda_t(X_t)=\frac{p_{\theta_t}(X_t)}{p_{\theta_0}(X_t)}$ for some predictable parameter $\theta_t\geq \theta_0$. Then,  
    $$
    \Ex_{\theta_0}[T_{\alpha}(b_t\lambda_t(X_t);M)|\Lambda_{t-1}^{\boost}]\leq 1 \implies \Ex_{\theta}[T_{\alpha}(b_t\lambda_t(X_t);M)|\Lambda_{t-1}^{\boost}]\leq 1 \quad (\theta\leq \theta_0)
    $$
    for any $\alpha\in (0,1)$, $M>0$ and $b_t\geq 1$.
\end{proposition}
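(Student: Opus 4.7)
The plan is to reduce the statement to the classical fact that under the monotone likelihood ratio (MLR) property \eqref{eq:monotone_LR}, the family $\{\mathbb{P}_\theta\}$ is stochastically ordered, so that expectations of nondecreasing functions are monotone in $\theta$. The key structural observation is that the integrand $T_{\alpha}(b_t \lambda_t(X_t); M)$ is a nondecreasing function of $X_t$ once the predictable quantities $b_t$, $\theta_t$ and $M$ are fixed.

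First I would verify this monotonicity. By the very definition of the truncation \eqref{eq:truncation}, $x\mapsto T_{\alpha}(x;M)$ is nondecreasing, and clearly $u\mapsto b_t u$ is nondecreasing since $b_t\geq 1\geq 0$. The MLR assumption applied to the pair $\theta_t \geq \theta_0$ gives that $x\mapsto \lambda_t(x) = p_{\theta_t}(x)/p_{\theta_0}(x)$ is nondecreasing. Composing the three maps shows that, conditional on $\mathcal{F}_{t-1}$ (which makes $b_t$, $\theta_t$ and $\Lambda_{t-1}^{\boost}$ measurable), the function
\[
g_{t}(x) := T_{\alpha}(b_t \lambda_t(x); M)
\]
is a (bounded, nonnegative) nondecreasing function of $x$.

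Next I would invoke stochastic dominance. The MLR property implies that for any $\theta \leq \theta_0$ the likelihood ratio $p_{\theta_0}(x)/p_{\theta}(x)$ is nondecreasing in $x$, and it is a standard consequence that $\mathbb{P}_\theta$ is then stochastically dominated by $\mathbb{P}_{\theta_0}$; equivalently, for every bounded nondecreasing $g$ we have $\mathbb{E}_{\theta}[g(X_t)] \leq \mathbb{E}_{\theta_0}[g(X_t)]$. Applying this with $g = g_t$ (noting that, conditional on $\mathcal{F}_{t-1}$, $X_t$ still has distribution $\mathbb{P}_{\theta^*}$ because the observations are i.i.d.\ and $g_t$'s random coefficients are $\mathcal{F}_{t-1}$-measurable) yields
\[
\mathbb{E}_{\theta}\!\left[T_{\alpha}(b_t \lambda_t(X_t); M)\,\big|\, \Lambda_{t-1}^{\boost}\right] \;\leq\; \mathbb{E}_{\theta_0}\!\left[T_{\alpha}(b_t \lambda_t(X_t); M)\,\big|\, \Lambda_{t-1}^{\boost}\right] \;\leq\; 1,
\]
where the last inequality is the assumption.

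The only subtlety, and thus the main obstacle, is handling the conditioning cleanly: one must argue that it suffices to condition on the full $\mathcal{F}_{t-1}$ rather than just on $\Lambda_{t-1}^{\boost}$, apply the stochastic dominance pointwise for each realization of $(b_t,\theta_t,M)$, and then take the conditional expectation with respect to $\sigma(\Lambda_{t-1}^{\boost}) \subseteq \mathcal{F}_{t-1}$ using the tower property. This is routine given the i.i.d.\ structure of the data and the fact that all relevant randomness beyond $X_t$ is $\mathcal{F}_{t-1}$-measurable. No extra integrability issue arises because $T_{\alpha}(\cdot; M) \leq 1/(M\alpha)$ is bounded.
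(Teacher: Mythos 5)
Your proposal is correct and follows essentially the same route as the paper's proof: both use the MLR property to obtain stochastic dominance of $\mathbb{P}_{\theta_0}$ over $\mathbb{P}_{\theta}$ and then the monotonicity of $x\mapsto T_{\alpha}(b_t\lambda_t(x);M)$ to transfer the inequality in expectation. The paper phrases the stochastic dominance directly for the statistic $\lambda_t(X_t)$ rather than for $X_t$ and is silent on the conditioning issue you (correctly) flag, but the argument is otherwise the same.
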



To summarize, for the composite testing problem above and under the monotone likelihood ratio property, we can do exactly the same as described in Section \ref{sec:comp_alt} for the simple null hypothesis without inflating type I error.

\begin{remark}
    An important class satisfying property \eqref{eq:monotone_LR} is the one-parameter exponential family, where each density is given by
$
p_{\theta}(x)=h(x)\exp(T(x)\eta(\theta)-A(\theta))
$
with $h$, $T$, $\eta$ and $A$ being known functions.
It is easy to see that \eqref{eq:monotone_LR} is fulfilled for the sufficient statistic $T(x)$, if $\eta(\theta)$ is nondecreasing in $\theta$.
\end{remark}

\section{Nontrivial extensions and applications\label{sec:extensions}}

We now demonstrate some nontrivial extensions and applications of our boosting method. First, we show how it can be used for tighten confidence sequences in Section~\ref{sec:CS}. Then, we exemplify its use in sampling without replacement situations (Section~\ref{sec:WoR}) and with conformal martingales (Section~\ref{sec:conformal_martingales}) for testing the assumption of i.i.d. data.

\subsection{Confidence sequences\label{sec:CS}}

A $(1-\alpha)$-level confidence sequence (CS) for $\theta^*\in \Theta$ is a sequence of sets $(C_t)_{t\in \mathbb{N}}$ such that
\begin{align}\mathbb{P}_{\theta^*}(\exists t\in \mathbb{N}:\theta^*\notin C_t)\leq \alpha.\label{eq:CS_def}\end{align}
Let $\tau_\theta$, $\theta\in \Theta$, be a sequential test for $H_0^{\theta}:\theta^*=\theta$ with property \eqref{eq:power-one_0}. Then 
$$
C_t=\{\theta\in \Theta: \tau_\theta>t\}
$$
is a CS for $\theta^*$. Let $(M_t^{\theta})$ be the test supermartingale for $H_0^{\theta}$ that defines  $\tau_\theta$, meaning $\tau_\theta=\inf\{t\in \mathbb{N}: M_t^{\theta}\geq 1/\alpha\}$. It follows immediately that $(C_t)_{t\in \mathbb{N}}$ can be boosted, meaning each $C_t$ shrinks in size while maintaining \eqref{eq:CS_def}, by boosting each of the martingales  $(M_t^{\theta})$, $\theta\in \Theta$, by Algorithm~\ref{alg:general}. However, the computational effort of comparing $M_t^{\theta}$ with $1/\alpha$ for every $\theta\in \Theta$ is extremely high or even infeasible. Due to computational convenience and interpretability, in practice one usually focuses on  CSs that form an interval
$
C_t=(l_t,u_t).
$
Often, two-sided CSs can then be obtained as the intersection of an upper and a lower $(1-\alpha/2)$-level CS. Hence, for now, let us for look a sequence $(l_t)_{t\in \mathbb{N}}$ such that   
$$\mathbb{P}_{\theta^*}(\exists t\in \mathbb{N}:\theta^*< l_t)\leq \alpha.$$
In order to ensure that the resulting CS has this form, it is required that the martingales $(M_t^{\theta})_{\theta \in \Theta}$ are nonincreasing in $\theta$ for every $t\in \mathbb{N}$. Now note that each boosting factor $b_t^{\theta}$, $\theta\in \Theta$ and $t\in \mathbb{N}$, defined by Algorithm~\ref{alg:general}, is a nondecreasing function of the (boosted) supermartingale at the previous step $M_{t-1}^{\theta, \boost}$. Therefore, the boosting factors $(b_t^{\theta})_{\theta\in \Theta}$ are also nonincreasing in $\theta$, provided that the distribution of $(M_t^{\theta})$ under $H_0^{\theta}$ is the same for all $\theta\in \Theta$. This implies that also the boosted processes $(M_t^{\theta, \mathrm{boost}})_{\theta\in \Theta}$ are nonincreasing in $\theta$ for each $t$ such that the resulting CS provides indeed a lower bound. With this, the boosted CS can be easily determined computationally by setting $l_t$ at each step $t\in \mathbb{N}$ to be the smallest $\theta\in \Theta$ such that $M_t^{\theta,\boost}< 1/\alpha$. In Supplementary Material~\ref{sec:example_CS}, we illustrate this by boosting a CS for Gaussian means.

\subsection{Sampling without replacement\label{sec:WoR}}
Suppose we have some finite population $\{x_1,\ldots,x_N\}$ and draw samples
$$
X_t|\{X_1,\ldots,X_{t-1}\} \sim \text{Uniform}(\{x_1,\ldots,x_N\}\setminus \{X_1,\ldots, X_{t-1}\})
$$
sequentially and without replacement with the goal of testing the hypothesis 
$$H_0:\mu^*\leq \mu_0 \quad \text{vs.} \quad H_1:\mu^*>\mu_0,$$
where $\mu^*=\frac{1}{N} \sum_{i=1}^N x_i$, because the effort of drawing all samples is too high. In this section, we focus on binary variables $X_i$. For example, this scenario is encountered in Risk-limiting audits (RLA) used for auditing the results of an election \citep{waudby2021rilacs} or when checking whether the permutation p-value is below the predefined level $\alpha$ \citep{waudby2020confidence}. Also note that this testing scenario is very different from the classical i.i.d. Bernoulli SPRT setting, since the conditional mean $C_i(\mu_0)= \frac{N\mu_0-\sum_{j=1}^{i-1} X_j}{N-i+1}$ of $X_i|X_1,\ldots, X_{i-1}$, if the mean of $\{x_1,\ldots,x_N\}$ is $\mu_0$, changes after each sample. In the following, we show how the state-of-the-art method for this setting can be improved by our boosting technique.



\citet{waudby2021rilacs} proposed the RiLACS test martingale
\begin{align}
M_t=\prod_{i=1}^t (1+\lambda_i(X_i-C_i(\mu_0))), \label{eq:mart_WoR}
\end{align}
where $\lambda_i\in \left[0,\frac{1}{C_i(\mu_0)}\right]$ is a hyperparameter measurable with respect to $\mathcal{F}_{i-1}$. Hence, each individual factor of $(M_t)$ is just given by $L_t=1+\lambda_i(X_i-C_i(\mu_0))$. Under the null hypothesis and conditional on $\mathcal{F}_{t-1}$, we have that $L_t=1+\lambda_i(1-C_i(\mu_0))$ with probability at most $C_i(\mu_0)$ and $L_t=1-\lambda_iC_i(\mu_0)$ with probability at least $1-C_i(\mu_0)$.  This implies that we can determine exact boosting factors $b_t$ by
\begin{align*} b_t=\begin{cases} \frac{1-C_i(\mu_0)/(M_{t-1}^{\boost}\alpha)}{(1-C_i(\mu_0))(1-\lambda_iC_i(\mu_0))}, & \text{ if } M_{t-1}^{\boost} (1+\lambda_i(1-C_i(\mu_0))) \geq 1/\alpha\\
1, & \text{ otherwise.}
\end{cases}\end{align*}

 \citet{waudby2021rilacs} proposed to choose $\lambda_i=\min (1/C_i(\mu_0), 2(2\mu_1-1))$, if the alternative $H_1:\mu^*=\mu_1$  is known in advance. For example, this is the case in Risk-limiting audits. In Figure~\ref{fig:sim_WoR}, we compare the required sample size of RiLACS and its boosted improvement for $\mu_0=0.5$ and $\mu_1=0.55$ for different total number of samples $N$ at level $\alpha=0.01$. The gain in sample size obtained by boosting is small but visible. This is not surprising since the RiLACS test martingale \eqref{eq:mart_WoR} only overshoots slightly at $1/\alpha$ due to the binary data. However, since $b_t$ has a closed form, this improvement is essentially free.

\begin{figure}[h!]
\centering
\includegraphics[width=0.8\textwidth]{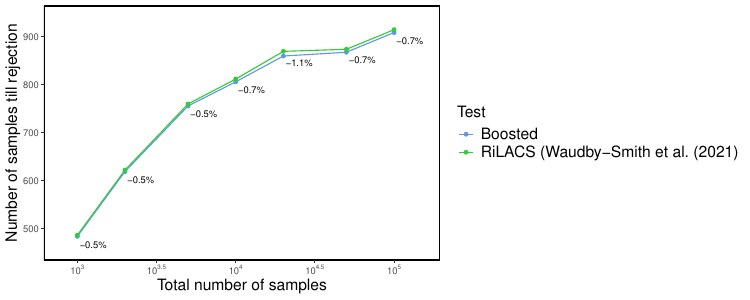}
\caption{Comparison of the required sample size by RILACS \citep{waudby2021rilacs} and its boosted improvement in a sampling without replacement setting for different total number of samples. Boosting leads to a small but consistent gain in sample size. 
\label{fig:sim_WoR} }\end{figure}

\subsection{Conformal martingales\label{sec:conformal_martingales}}

A common assumption in statistics has been that the data is generated independently from the same distribution, also called i.i.d. (independent and identically distributed). Here, we consider testing this assumption sequentially, meaning to test the composite null hypothesis 
$$
H_0:X_1,X_2,\ldots \text{ are i.i.d.}
$$
From a single data sequence it is impossible to distinguish whether the data is i.i.d. or exchangeable \citep{ramdas2022testing}, where $(X_t)_{t\in \mathbb{N}}$ is called exchangeable if $(X_1,\ldots,X_t)$ has the same joint distribution as $(X_{\sigma(1)},\ldots,X_{\sigma(t)})$ for every $t\in \mathbb{N}$ and permutation $\sigma$ of the first $t$ indices. Hence, $H_0$ can be formulated equivalently as 
$$
H_0:X_1,X_2,\ldots \text{ are exchangeable.}
$$
Vovk's conformal martingale approach \citep{vovk2021testing} allows to reduce $H_0$ to a simple null hypothesis. This makes it not only possible to test $H_0$ sequentially using a test martingale, but also to apply our boosting technique from Algorithm~\ref{alg:general}. We first recap the method by Vovk and then demonstrate how we can apply our boosting approach on top of it.

Let $A$ be a \textit{nonconformity measure}, which maps observations $(x_1,\ldots, x_t)$ of any length to a sequence of numbers $(z_1,\ldots,z_t)\in \mathbb{R}^t$ such that for any $t$ and any permutation $\sigma$,
$$
A(x_1,\ldots,x_t)=(z_1,\ldots, z_t)\implies A(x_{\sigma(1)},\ldots,x_{\sigma(t)})=(z_{\sigma(1)},\ldots,z_{\sigma(t)}).
$$
With this, for each $t\in \mathbb{N}$, one can calculate a \enquote{conformal p-value} by
$$
p_t=\frac{|\{i\leq t:z_i>z_t\}|+\theta_t|\{i\leq t: z_i=z_t\}|}{t},
$$
where $(z_1,\ldots, z_t)=A(x_1,\ldots, x_t)$ and $\theta_t$ is randomly and independently sampled from $U[0,1]$. Vovk showed that the p-values $p_1, p_2,\ldots$ are i.i.d. from $U[0,1]$ under the null hypothesis $H_0$, and proposed to collect evidence against $H_0$ by forming a \textit{conformal martingale}
$$
M_t=\prod_{i=1}^t f_i(p_i),
$$
where the `betting function' $f_i:[0,1]\rightarrow [0,\infty]$, $i\in \mathbb{N}$ may depend on the previous p-values and satisfies $ \int_0^1 f_i(u)du=1$. It is easy to check that this ensures $\mathbb{E}_0[f_t(p_t)|\mathcal{F}_{t-1}]=1$, where $\mathcal{F}_{t-1}=\sigma(p_1,\ldots,p_{t-1})$, such that $(M_t)$ is indeed a test martingale for $H_0$ and $f_i(p_i)$  are the corresponding individual factors. Note that $(M_t)$ may not be a martingale with respect to the filtration generated by the observations $X_1,X_2,\ldots$, but it surely is with respect to the filtration generated by the p-values $p_1,p_2,\ldots$ . Since we know that the p-values $p_1,p_2,\ldots$ are i.i.d. uniformly distributed on $U[0,1]$ under $H_0$, we can easily determine the distribution of $f_i(p_i)$ conditional on the previous p-values such that Algorithm~\ref{alg:general} can be applied.

For example, a concrete betting function proposed by Vovk is
$
f^{\kappa}(u)=\kappa u^{\kappa-1}, \quad \kappa\in (0,1).
$
 At each step $t\in \mathbb{N}$, setting $\mathbb{E}_0[T_{\alpha}(b_tf_t(p_t);M_{t-1}^{\boost})|\mathcal{F}_{t-1}]=1$ yields the equation
$$
b_t \max\left(1-\left(\frac{1/(\alpha M_{t-1}^{\boost})}{b_t \kappa}\right)^{\frac{\kappa}{\kappa -1}}, 0\right)+\frac{1}{\alpha M_{t-1}^{\boost}} \min\left(\left(\frac{1/(\alpha M_{t-1}^{\boost})}{b_t \kappa}\right)^{\frac{\kappa}{\kappa -1}},1\right)=1,
$$
which can be solved for $b_t$. For example, $\alpha=0.05$, $M_{t-1}^{\boost}=1$ and $\kappa=0.5$ lead to a boosting factor of $b_t=1.013$. The boosting factor is decreasing in $\kappa$ such that we obtain a boosting factor of $b_t=2.199$ for $\kappa=0.1$. In practice, it can be useful to choose a predictable sequence of parameters $(\kappa_t)_{t\in \mathbb{N}}$ and employ the betting function $f^{k_t}$ for the p-value $p_t$ \citep{vovk2021testing}. However, this poses no problem for our boosting algorithm (see Section~\ref{sec:comp_alt}).

\section{
Challenging Wald's two-sided approximate SPRT
\label{sec:futility}}

While the previous sections focused on power-one SPRTs ($\beta=0$) where we only stop to reject the null but not to accept it, we now consider the case of $\beta > 0$, where we also wish to stop if it is unlikely that we would reject the null if we continued testing. 
We will show how we can use a boosting technique similar to that in Section~\ref{sec:avoid_overshoot} to construct powerful anytime-valid tests that allow to ``stop for futility'' (stop and accept the null).

Our main comparison point will be Wald's approximate two-sided SPRT, which uses thresholds of $(1-\beta)/\alpha$ and $\beta/(1-\alpha)$. Importantly, out of the three desiderata of (a) type I error control at $\alpha$, (b) type II error control at $\beta$, (c) optimal sample size, these approximate thresholds guarantee none of the three properties. In this section, we will first develop a general method with valid type I error control for test supermartingales, and then show how the stop for futility can be chosen to obtain tight type I and type II error control in the SPRT case. The latter method often outperforms Wald's approximate SPRT in the sense that while guaranteeing (a) and (b), our method also manages to typically/often stop sooner than Wald's (though not guaranteeing (c)).

\subsection{\smash{Two-sided stopping with type I error control for supermartingales}\label{sec:two_sided_typeI}}

Suppose our test is based on a test (super)martingale $(M_t)$, such as the likelihood ratio. We consider stopping for futility if $M_t\leq \nu_t$, where $\nu_t\leq 1/\alpha$ is a predictable parameter, meaning $\nu_t$ is measurable with respect to $\mathcal{F}_{t-1}$. For example, Wald's approximation suggests to set $\nu_t=\frac{\beta}{1-\alpha}$. Note that stopping for futility can never increase the type I error, but it may hurt the power. Below, we will show how we can gain back part of that power.

If $M_t>0$, we have $\mathbb{E}_0[M_{t'}|\mathcal{F}_t]=M_t>0$ for all $t'>t$. Hence, if we stop when our test martingale  is strictly positive, we make a conservative decision since there is a positive probability of rejecting if we continue sampling. Therefore, our approach is to form a truncated test (super)martingale $(M_t^{\mathrm{trunc}})$ such that $M_t\leq \nu_t$ implies that $M_t^{\mathrm{trunc}}=0$. To achieve this, we extend our truncation function $T_{\alpha}(x; M)$ from \eqref{eq:truncation} as follows:
\begin{align}
T_{\alpha}(x; M, \nu)=\begin{cases}
        0, &\text{ if } Mx\leq \nu \\
        x, &\text{ if } \nu<Mx\leq \frac{1}{\alpha} \\
        \frac{1}{M \alpha}, &\text{ if } Mx> \frac{1}{\alpha}.
    \end{cases} 
    \label{eq:truncation_futility}
\end{align}
We can use $T_{\alpha}(x; M, \nu)$ for boosting in the same manner as we did with $T_{\alpha}(x; M)$ for the one-sided stopping boundary by choosing a boosting factor $b_t\geq 1$ such that
\begin{align}\Ex_{0}[T_{\alpha}(b_t L_t;M_{t-1}^{\mathrm{boost}}, \nu_t)|\mathcal{F}_{t-1}]\leq 1,\label{eq:boosting_inequality_2side}\end{align}
where $M_{t-1}^{\mathrm{boost}}=\prod_{i=1}^{t-1} T_{\alpha}(b_iL_i; M_{i-1}^{\mathrm{boost}}, \nu_i) $. Note that $T_{\alpha}(x; M, \nu)=T_{\alpha}(x; M)$, if $\nu=0$. Furthermore, $T_{\alpha}(x; M, \nu)\leq T_{\alpha}(x; M)$ for all $x\in \mathbb{R}_{\geq 0}$, implying that the boosting factors obtained by  $T_{\alpha}(x; M, \nu)$ are larger than the boosting factors computed with $T_{\alpha}(x; M)$. The following theorem extends Theorem~\ref{theo:main} by including a specific stop for futility.
\begin{theorem}\label{theo:main_2side}
    Let $\nu_t^M\leq 1/\alpha$, $t\in \mathbb{N}$, be a predictable parameter and $\nu_t=\min\left(\nu_t^M\prod_{i=1}^t b_i , 1/\alpha\right)$. Furthermore, let $(M_t)$ be any test (super)martingale and $(M_t^{\mathrm{boost}})$ be the process obtained by applying Algorithm~\ref{alg:general} with $T_{\alpha}(x; M_{t-1}^{\boost}, \nu_t)$ and define the sequential tests $\delta_{M}=\mathbbm{1}\{M_{\tau_{M}}\geq 1/\alpha\}$ and $\delta_{\mathrm{boost}}=\mathbbm{1}\{M_{\tau_{\mathrm{boost}}}^{\mathrm{boost}}\geq 1/\alpha\}$, where $\tau_{M}=\inf\{t\in \mathbb{N}: M_t\geq 1/\alpha \lor M_t\leq \nu_t^M\}$ and $\tau_{\mathrm{boost}}=\inf\{t\in \mathbb{N}: M_t^{\mathrm{boost}}\geq 1/\alpha \lor M_t^{\mathrm{boost}}\leq \nu_t\}$. Then $M_t^{\mathrm{boost}}\geq M_t$ for all $t< \tau_{\boost}$,
    $\delta_{\mathrm{boost}}\geq \delta$ and $\tau_{\mathrm{boost}}\leq \tau_M$. Furthermore, $(M_t^{\boost})$ is a test supermartingale. Therefore, 
    $\mathbb{P}_0(\delta_{\mathrm{boost}}=1)\leq \alpha$ and $\mathbb{E}_0[M_{\tau}^{\boost}]\leq 1$ for all stopping times $\tau$. 
\end{theorem}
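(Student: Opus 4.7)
The plan is to adapt the proof of Theorem~\ref{theo:main} to account for two new ingredients: the lower-truncation branch of $T_\alpha(\cdot; M_{t-1}^{\boost}, \nu_t)$, and the cumulative rescaling of the futility threshold $\nu_t = \min(\nu_t^M \prod_{i=1}^t b_i,\, 1/\alpha)$. First, I would establish that $(M_t^{\boost})$ is a test supermartingale: nonnegativity is immediate from the definition of $T_\alpha$, and the conditional inequality $\mathbb{E}_0[M_t^{\boost} \mid \mathcal{F}_{t-1}] \leq M_{t-1}^{\boost}$ follows directly from \eqref{eq:boosting_inequality_2side} together with the $\mathcal{F}_{t-1}$-measurability of $M_{t-1}^{\boost}$ and $\nu_t$. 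Next, I would prove by induction on $t$ the product identity
\[
M_t^{\boost} = \Big(\prod_{i=1}^t b_i\Big) M_t \quad \text{for all } t < \min(\tau_M, \tau_{\boost}),
\]
by noting that on this event $\nu_s^M < M_s < 1/\alpha$ and $\nu_s < M_s^{\boost} < 1/\alpha$ for every $s \le t$, which forces the ``identity'' middle branch of $T_\alpha$ at each step.

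The main obstacle is the comparison $\tau_{\boost} \leq \tau_M$, which I would prove by contradiction. Assuming $\tau_{\boost} > \tau_M$, the product identity extends up to $t = \tau_M - 1$, so $b_{\tau_M} L_{\tau_M} M_{\tau_M-1}^{\boost} = (\prod_{i=1}^{\tau_M} b_i) M_{\tau_M}$, and I can determine which branch of $T_\alpha$ fires at step $\tau_M$. Splitting on whether the base stops by rejection ($M_{\tau_M} \geq 1/\alpha$) or by futility ($M_{\tau_M} \leq \nu_{\tau_M}^M$) and, within the latter, on whether the $\min$ in $\nu_{\tau_M}$ is attained by $(\prod b_i) \nu_{\tau_M}^M$ or by $1/\alpha$, each case forces $M_{\tau_M}^{\boost} \in \{0, 1/\alpha\}$, contradicting $\tau_{\boost} > \tau_M$. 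The delicate point is the sub-case where $(\prod b_i) \nu_{\tau_M}^M > 1/\alpha$: here the middle region of $T_\alpha$ collapses because $\nu_{\tau_M} = 1/\alpha$, and one must verify separately whether the boosted value lands above or below $1/\alpha$, but in both possibilities the process must stop.

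With $\tau_{\boost} \leq \tau_M$ in hand, the remaining claims follow quickly. The dominance $M_t^{\boost} \geq M_t$ for $t < \tau_{\boost}$ drops out of the product identity (which now applies throughout $\{t < \tau_{\boost}\} \subseteq \{t < \tau_M\}$) combined with $b_i \geq 1$. For $\delta_{\boost} \geq \delta_M$, if $\tau_{\boost} = \tau_M$ then the case-A analysis above shows that rejection by the base forces $M_{\tau_M}^{\boost} = 1/\alpha$; if instead $\tau_{\boost} < \tau_M$, I would rule out early futility by a symmetric argument (a futility stop at $\tau_{\boost}$ would, via the product identity, give $M_{\tau_{\boost}} \leq \nu_{\tau_{\boost}}^M$, contradicting $\tau_{\boost} < \tau_M$), so the boosted process must have stopped for rejection and $\delta_{\boost} = 1$. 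Finally, $\mathbb{P}_0(\delta_{\boost} = 1) \leq \mathbb{P}_0(\sup_t M_t^{\boost} \geq 1/\alpha) \leq \alpha$ by Ville's inequality \eqref{eq:ville} applied to the nonnegative supermartingale $(M_t^{\boost})$, and $\mathbb{E}_0[M_\tau^{\boost}] \leq 1$ at any stopping time $\tau$ is the standard optional stopping bound for nonnegative supermartingales.
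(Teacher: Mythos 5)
Your proof is correct, and it follows essentially the same route as the paper: both hinge on the observation that, while neither process has stopped, $M_t^{\boost}=(\prod_{i\le t}b_i)M_t$, so the definition $\nu_t=\min(\nu_t^M\prod b_i,1/\alpha)$ lets you translate futility (and rejection) events between the base and boosted processes by dividing or multiplying by $\prod b_i$, with the $\nu_t=1/\alpha$ edge case forcing a stop. The paper phrases these as direct implications ($M_t^{\boost}\le\nu_t\Rightarrow M_{t'}\le\nu_{t'}^M$ and $M_t\le\nu_t^M$ or $M_t\ge1/\alpha\Rightarrow\tau_{\boost}\le t$) rather than your explicit product identity plus a contradiction argument, but the mathematical content is the same.
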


Note that in the above theorem $\nu_t$ is a function of $b_t$. This ensures that our boosted martingale $(M_t^{\boost})$ always stops in cases where the initial martingale $(M_t)$ stops for futility, which allows to prove that $\tau_{\mathrm{boost}}\leq \tau_M$. However, we think in practice both scenarios are reasonable. We could either want to provide a stop for futility $\nu_t$ directly (independently of $b_t$) for $M_t^{\boost}$ or provide a stop for futility $\nu_t^M$ for $M_t$ (independently of $b_t$) and then set $\nu_t=\min\left(\nu_t^M\prod_{i=1}^t b_i , 1/\alpha\right)$. For the rest of this subsection, we treat $\nu_t$ as independent of $b_t$. However, all results shown apply equivalently for $\nu_t=\min\left(\nu_t^M\prod_{i=1}^t b_i , 1/\alpha\right)$, where $\nu_t^M$ does not dependent on $b_t$.

It turns out that in the two-sided case Proposition~\ref{prop:solvability} no longer holds, meaning we cannot always find a $b_t$ which solves \eqref{eq:boosting_inequality_2side} with an equality.
\begin{example}\label{example:binary}
     Suppose we have i.i.d. binary data $X_1,X_2,\ldots$ and $\mathbb{E}_0[X_t]=1/3$ and $\mathbb{E}_1[X_t]=2/3$. Then the likelihood ratio $\lambda(X_t)=2$, if $X_t=1$, and $\lambda(X_t)=1/2$ otherwise. If $\alpha=0.05$, $\Lambda_{t-1}^{\boost}=8$ and $\nu_t=7$. Then $\Ex_{0}[T_{\alpha}(b_t L_t;\Lambda_{t-1}^{\mathrm{boost}}, \nu_t)|\mathcal{F}_{t-1}]\leq 1/3\cdot 20/8<1$ for all $1\leq b_t\leq 7/4$ and $\Ex_{0}[T_{\alpha}(b_t L_t;\Lambda_{t-1}^{\mathrm{boost}}, \nu_t)|\mathcal{F}_{t-1}]\geq 2/3\cdot 7/8 +1/3\cdot 20/8>1$ for all $b_t>7/4$.
\end{example}
  The problem in Example~\ref{example:binary} is that $T_{\alpha}(x;M,\nu)$ is not continuous in $x$ and that the data is binary, creating a jump in the expectation of the boosted (super)martingale factor. In Supplementary Material~\ref{sec:boosting_discrete}, we discuss how external randomization could be used to avoid conservatism with discrete data when using a stop for futility and in the following proposition we show that Proposition~\ref{prop:solvability} can at least be extended to two-sided stopping for continuously distributed martingale factors.

\begin{proposition}\label{prop:solvability_2side}
    If $L_t=M_t/M_{t-1}$ is continuously distributed under $\mathbb{P}_0$
    and ${M_{t-1}^{\boost}\in(0,\alpha^{-1})}$, then there exists a $b_t^*\geq 1$ such that 
    \begin{align}\Ex_{0}[T_{\alpha}(b_t^* L_t;M_{t-1}^{\mathrm{boost}}, \nu_t)|\mathcal{F}_{t-1}]= 1.\label{eq:boosting_equality_2side}\end{align}
    Further, for every other $b_t\geq 1$ that satisfies \eqref{eq:boosting_inequality_2side}, we have that $T_{\alpha}(b_t^* L_t;M_{t-1}^{\mathrm{boost}}, \nu_t)\geq T_{\alpha}(b_t L_t;M_{t-1}^{\mathrm{boost}}, \nu_t)$ $\mathbb{P}_0$-almost surely.
\end{proposition}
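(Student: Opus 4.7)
My plan is to view $g(b) := \Ex_0[T_\alpha(bL_t; M_{t-1}^{\boost}, \nu_t) \mid \mathcal{F}_{t-1}]$ as a continuous, nondecreasing function of $b \in [1,\infty)$ with $g(1) \leq 1$ and $\lim_{b\to\infty} g(b) > 1$, and then invoke the intermediate value theorem to extract $b_t^*$. First I would verify the boundary values: at $b=1$, a case analysis on the three regimes of $T_\alpha$ shows $T_\alpha(L_t; M_{t-1}^{\boost}, \nu_t) \leq L_t$ pointwise, so $g(1) \leq \Ex_0[L_t \mid \mathcal{F}_{t-1}] \leq 1$ by the (super)martingale property of $(M_t)$. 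As $b \to \infty$, for every fixed $L_t > 0$ one eventually has $bL_t > 1/(\alpha M_{t-1}^{\boost})$, so $T_\alpha(bL_t;\cdot) \to 1/(\alpha M_{t-1}^{\boost})$ pointwise on $\{L_t>0\}$; since continuity of $L_t$ implies $\mathbb{P}_0(L_t > 0 \mid \mathcal{F}_{t-1}) = 1$, bounded convergence gives $g(b) \to 1/(\alpha M_{t-1}^{\boost}) > 1$, the strict inequality using $M_{t-1}^{\boost} < \alpha^{-1}$. Monotonicity of $g$ then follows from the observation that for every fixed $x \geq 0$ the map $b \mapsto T_\alpha(bx; M_{t-1}^{\boost}, \nu_t)$ is nondecreasing (it sits at $0$, jumps up to $\nu_t/M_{t-1}^{\boost}$ at $b = \nu_t/(xM_{t-1}^{\boost})$, grows linearly in the middle regime, and then saturates at the cap $1/(\alpha M_{t-1}^{\boost})$).

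The hard part will be continuity of $g$, which is precisely where the continuous-distribution hypothesis enters. For fixed $L_t > 0$, the function $b \mapsto T_\alpha(bL_t; M_{t-1}^{\boost}, \nu_t)$ has exactly one discontinuity, at $b = \nu_t/(L_t M_{t-1}^{\boost})$. For any sequence $b_n \to b_0$, the event on which $T_\alpha(b_n L_t; \cdot)$ fails to converge to $T_\alpha(b_0 L_t; \cdot)$ is contained in $\{L_t = \nu_t/(b_0 M_{t-1}^{\boost})\}$, which has $\mathbb{P}_0$-probability zero because $L_t$ has no atoms. The dominated convergence theorem, using the uniform bound $1/(\alpha M_{t-1}^{\boost})$, then yields $g(b_n) \to g(b_0)$. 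This is exactly the step obstructed by Example~\ref{example:binary}: mass at the jump point makes $g$ genuinely discontinuous in the discrete case, so equality with $1$ need not be attainable.

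To finish, I would set $b_t^* := \sup\{b \geq 1 : g(b) \leq 1\}$. The limit at infinity forces $b_t^* < \infty$, continuity of $g$ gives $g(b_t^*) = 1$, and $b_t^* \geq 1$ holds because $g(1)\leq 1$; this establishes \eqref{eq:boosting_equality_2side}. For any competing $b_t \geq 1$ satisfying \eqref{eq:boosting_inequality_2side}, the monotonicity of $g$ together with the supremum characterization of $b_t^*$ forces $b_t \leq b_t^*$, and then pointwise monotonicity of $T_\alpha(\,\cdot\,L_t; M_{t-1}^{\boost}, \nu_t)$ in its first argument gives $T_\alpha(b_t L_t; M_{t-1}^{\boost}, \nu_t) \leq T_\alpha(b_t^* L_t; M_{t-1}^{\boost}, \nu_t)$ $\mathbb{P}_0$-a.s., which is the claimed optimality.
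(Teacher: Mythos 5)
Your proof is correct, and the continuity step takes a genuinely different and arguably cleaner route than the paper's. The paper proves continuity of $b \mapsto \Ex_0[T_\alpha(b L_t; M_{t-1}^{\boost}, \nu_t) \mid \mathcal{F}_{t-1}]$ by an explicit $\epsilon$--$\delta$ calculation: it writes the conditional expectation as an integral against the density $f_{L_t \mid \mathcal{F}_{t-1}}$, splits off the slab between $\nu_t/(b_2 M_{t-1}^{\boost})$ and $\nu_t/(b_1 M_{t-1}^{\boost})$, and bounds it by the sup of the integrand times the integral of the density over that shrinking interval. You instead observe that for each fixed realization $L_t = x > 0$ the map $b \mapsto T_\alpha(bx; M_{t-1}^{\boost}, \nu_t)$ has a single jump at $b = \nu_t/(x M_{t-1}^{\boost})$, so the set on which pointwise convergence fails along $b_n \to b_0$ is contained in the $\mathbb{P}_0$-null event $\{L_t = \nu_t/(b_0 M_{t-1}^{\boost})\}$, and then invoke bounded convergence with the uniform cap $1/(\alpha M_{t-1}^{\boost})$. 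This is more parallel to the paper's own proof of Proposition~\ref{prop:solvability} (which also uses dominated convergence, just without a null exceptional set to handle), and it requires only that $L_t$ have no atoms, which is strictly weaker than assuming a conditional density as the paper does. The remaining pieces --- $g(1)\leq 1$ from the supermartingale property and $T_\alpha(x; M, \nu)\leq x$, $\lim_{b\to\infty} g(b) = \mathbb{P}_0(L_t>0\mid\mathcal{F}_{t-1})/(\alpha M_{t-1}^{\boost}) > 1$, monotonicity of $g$, the intermediate value theorem, and the optimality claim via pointwise monotonicity of $T_\alpha$ in $b$ --- track the paper's Proposition~\ref{prop:solvability} argument, which the paper simply cites at this point; your supremum characterization $b_t^* = \sup\{b\geq 1: g(b)\leq 1\}$ is a minor tidying of the paper's case analysis ($b_t<b_t^*$ versus $b_t>b_t^*$) and lands in the same place.
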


In the same manner as in \eqref{eq:boost_SPRT}, we can calculate for SPRTs:
\begin{align*}
    &\mathbb{E}_{\theta_0}[T_{\alpha}(b_t \lambda(X_t) ; \Lambda_{t-1}^{\boost}, \nu_t)| \Lambda_{t-1}^{\boost}] \\ &= b_t\mathbb{P}_{\theta_1}\left(\frac{\nu_t}{b_t \Lambda_{t-1}^{\boost}}<\lambda(X_t)\leq \frac{1}{b_t\alpha \Lambda_{t-1}^{\boost}}\bigg\vert \Lambda_{t-1}^{\boost}\right)  +\frac{1}{\Lambda_{t-1}^{\boost}\alpha}\mathbb{P}_{\theta_0}\left(\lambda(X_t)> \frac{1}{b_t\alpha \Lambda_{t-1}^{\boost}}\bigg\vert \Lambda_{t-1}^{\boost}\right).
\end{align*}

The boosting factors obtained in the same setup as for Table~\ref{tab:boosting_factors}, but with a stop for futility at $\nu_t=0.4$, is shown in Table~\ref{tab:boosting_factors_futil}. Obviously, all boosting factors are larger when the stop for futility is included. Furthermore, especially in cases where the current martingale value $\Lambda_{t-1}^{\boost}$ is small, including the stop for futility increases the boosting factor.

\begin{table}[!htb]
    \centering
    \begin{tabular}{l|c c c c c }
        \diagbox[]{$\delta$}{$\Lambda_{t-1}^{\boost}$} & $0.5$ & $1$ & $2$ & $4$ & $10$ \\ \hline
        $0.1$ & $1.00895$ & $1$ & $1$ & $1$ & $1.00001$ \\ 
        $0.5$ & $1.17964$ & $1.01743$ & $1.00026$ & $1.00019$ & $1.03019$ \\
        $1$ & $1.21801$ & $1.07547$ & $1.02817$ & $1.05651$ & $1.37357$ \\ 
        $2$ & $1.32013$ & $1.38991$ & $1.62094$ & $2.21769$ & $5.76214$ \\ 
        $3$ & $2.73073$ & $3.75762$ & $6.00201$ & $12.1467$ & $68.8295$ \\ 
    \end{tabular}
    \caption{Boosting factors obtained for $\alpha=0.05$ and $\nu_t=0.4$, $t\in \mathbb{N}$, in a simple Gaussian testing problem with mean difference $\delta$ and variance $1$. \label{tab:boosting_factors_futil}}
\end{table}

\begin{remark}
    Note that our boosted SPRT could also be written as
    $$
    \text{reject }H_0, \text{ if } \Lambda_t\geq \gamma_1^t, \quad  \text{accept } H_0, \text{ if }  \Lambda_t\leq \gamma_0^t, \quad \text{and continue sampling if } \gamma_0^t < \Lambda_t< \gamma_1^t,
    $$
    where $\gamma_1^t=\frac{1}{\alpha\prod_{i=1}^t b_i}$ and $\gamma_0^t=\frac{\nu_t}{\prod_{i=1}^t b_i}$ are predictable parameters. Such generalized SPRTs with non-constant boundaries over time were already introduced by \citet{weiss1953testing}, however, Weiss assumed that $\gamma_0^t$ and $\gamma_1^t$, $t\in \mathbb{N}$,  are prespecified constants.
\end{remark}

\begin{remark}
    Typically, the parameter $\nu_t$ is set to some value smaller than $1$, as $\Lambda_t<1$ implies that the current data favors the null hypothesis over the alternative. However, $\nu_t$ is a predictable parameter and one could also think of choosing it based on the current value of $\Lambda_{t-1}^{\boost}$. For example, it might be reasonable to set $\nu_t=8$ if $\Lambda_{t-1}^{\boost}=10$ to stop in case of decreasing evidence. In general, it is even possible to set $\nu_t>\Lambda_{t-1}^{\boost}$. 
\end{remark}

\subsection{Two-sided stopping with type I, II error control for SPRTs}

We know that boosting always ensures type I error control at the level $\alpha$ by Ville's inequality \eqref{eq:ville}. Now we want to use the same approach to choose the stop for futility $\nu_t$ in a way that ensures valid type II error control at a level $\beta\in (0,1)$. The idea is to use a second SPRT for testing the alternative $H_1:X_i\sim \mathbb{P}_1$ against the null hypothesis $H_0:X_i\sim \mathbb{P}_0$ and stop for futility with our original test if this \enquote{inverse SPRT} rejects $H_1$. Hence, instead of using the likelihood ratio process $(\Lambda_t)$ of $\mathbb{P}_1$ against $\mathbb{P}_0$, we use its inverse $\Lambda_t^{\mathrm{inv}}=1/\Lambda_t$. By Ville's inequality \eqref{eq:ville}, we control the type II error for the original test of $H_0$ against $H_1$ at level $\beta$, if we stop for futility when the inverse likelihood ratio $\Lambda_t^{\mathrm{inv}}$ crosses $1/\beta$. However, as we demonstrated in the previous sections, Ville's inequality is loose if there is overshoot. Hence, to get tighter type II error control, we need to boost the inverse SPRT in the same manner as we did for the usual SPRT and include the stop for futility yielded by the inverse SPRT in the truncation function  \eqref{eq:truncation_futility}. Precisely, denoting $\lambda_t(X_t)=p_1(X_t)/p_0(X_t)$ at each step $t\in \mathbb{N}$, we choose $b_t\geq 1$ and $b_t^{\mathrm{inv}}\geq 1$ as large as possible such that
\begin{align}
    \mathbb{E}_0\left[T_{\alpha}(b_t \lambda_t(X_t) ; \Lambda_{t-1}^{\boost}, \nu_t)\Big\vert \mathcal{F}_{t-1} \right] \leq 1,  &\text{ where } \nu_t=\min(\beta \prod_{i=1}^t b_i \prod_{i=1}^t b_i^{\mathrm{inv}}, 1/\alpha) ,\label{eq:boost_original} \\
    \mathbb{E}_1\left[T_{\beta}(b_t^{\mathrm{inv}}/\lambda_t(X_t) ; \Lambda_{t-1}^{\boost, \mathrm{inv}}, \nu_t^{ \mathrm{inv}})\Big\vert \mathcal{F}_{t-1} \right] \leq 1,  &\text{ where } \nu_t^{ \mathrm{inv}}=\min(\alpha \prod_{i=1}^t b_i \prod_{i=1}^t b_i^{\mathrm{inv}}, 1/\beta),\label{eq:boost_inverse}
\end{align}
$$ \Lambda_t^{\boost}= \prod_{i=1}^t T_{\alpha}(b_i \lambda_i(X_i) ; \Lambda_{i-1}^{\boost}, \nu_i)\quad \text{and} \quad \Lambda_t^{\boost, \mathrm{inv}}=\prod_{i=1}^t T_{\alpha}(b_i^{\boost} /\lambda_i(X_i) ; \Lambda_{i-1}^{\boost, \mathrm{inv}}, \nu_i^{\mathrm{inv}}).$$ The definition of $\nu_t$ ensures that the original SPRT only stops for futility when the boosted inverse SPRT stops for rejection. We next use this to yield valid error control.

\begin{theorem}\label{theo:boosting_type_I_II}
    Let the boosting factors $b_1, b_2,\ldots$ and inverse boosting factors $b_1^{\mathrm{inv}}, b_2^{\mathrm{inv}},\ldots$ be chosen such that \eqref{eq:boost_original} and \eqref{eq:boost_inverse} are fulfilled for each $t\in \mathbb{N}$. Then the boosted SPRT given by $\delta^{\boost}=\mathbbm{1}\{\Lambda_{\tau_{\boost}}^{\boost}\geq 1/\alpha\}$ with $\tau_{\boost}=\inf\{t\in \mathbb{N}: \Lambda_{t}^{\boost}\geq 1/\alpha \lor \Lambda_{t}^{\boost}\leq \nu_t\}$ controls the type I error at level $\alpha$ and the type II error at level $\beta$.
\end{theorem}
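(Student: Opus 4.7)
The plan is to bound the two error probabilities separately by applying Ville's inequality to two different test supermartingales, and then to couple their crossing events through an algebraic identity between the original and inverse boosted processes.

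The type I bound is essentially immediate from Theorem~\ref{theo:main_2side}: condition~\eqref{eq:boost_original} is exactly the boosting constraint of Algorithm~\ref{alg:general} applied with the two-sided truncation $T_{\alpha}(\cdot;\Lambda_{t-1}^{\boost},\nu_t)$, so $(\Lambda_t^{\boost})$ is a test supermartingale under $\mathbb{P}_0$ and Ville's inequality~\eqref{eq:ville} gives $\mathbb{P}_0(\delta^{\boost}=1)\leq \mathbb{P}_0(\exists t:\Lambda_t^{\boost}\geq 1/\alpha)\leq \alpha$. For type II, I would apply the same theorem in mirror image: since $\mathbb{E}_1[1/\lambda_t(X_t)\mid\mathcal{F}_{t-1}]=1$, condition~\eqref{eq:boost_inverse} ensures that $(\Lambda_t^{\boost,\mathrm{inv}})$ is a test supermartingale under $\mathbb{P}_1$ with threshold $1/\beta$, and Ville's inequality yields
$$\mathbb{P}_1(\exists t:\Lambda_t^{\boost,\mathrm{inv}}\geq 1/\beta)\leq \beta.$$

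The remaining step is to establish the containment $\{\delta^{\boost}=0,\tau_{\boost}<\infty\}\subseteq\{\exists t:\Lambda_t^{\boost,\mathrm{inv}}\geq 1/\beta\}$. The key observation is that, by the definition of $T_{\alpha}(\cdot;\cdot,\cdot)$, the truncations cannot engage strictly before $\tau_{\boost}$: the original process lies in $(\nu_t,1/\alpha)$ for $t<\tau_{\boost}$, and the symmetric definitions of $\nu_t$ and $\nu_t^{\mathrm{inv}}$ force the inverse process to correspondingly lie in $(\nu_t^{\mathrm{inv}},1/\beta)$. Consequently
$$\Lambda_t^{\boost}=\prod_{i=1}^t b_i\lambda_i(X_i),\quad \Lambda_t^{\boost,\mathrm{inv}}=\prod_{i=1}^t b_i^{\mathrm{inv}}/\lambda_i(X_i),\quad\text{so}\quad \Lambda_t^{\boost}\,\Lambda_t^{\boost,\mathrm{inv}}=\prod_{i=1}^t b_i\prod_{i=1}^t b_i^{\mathrm{inv}}$$
for every $t<\tau_{\boost}$. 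If the original stops for futility at $\tau_{\boost}$, the untruncated candidate $b_{\tau_{\boost}}\lambda_{\tau_{\boost}}(X_{\tau_{\boost}})\Lambda_{\tau_{\boost}-1}^{\boost}$ is at most $\nu_{\tau_{\boost}}\leq \beta\prod_{i=1}^{\tau_{\boost}}b_i\prod_{i=1}^{\tau_{\boost}}b_i^{\mathrm{inv}}$; plugging this into the product identity at time $\tau_{\boost}-1$ forces the untruncated inverse candidate at $\tau_{\boost}$ to be at least $1/\beta$, which after the inverse truncation produces $\Lambda_{\tau_{\boost}}^{\boost,\mathrm{inv}}\geq 1/\beta$.

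The main obstacle is precisely this coupling: showing that both truncations remain inactive before $\tau_{\boost}$ so that the product identity is available, and then using it to translate a futility stop of the original into a crossing of $1/\beta$ by the inverse. A minor additional case to handle is the cap regime $\nu_{\tau_{\boost}}=1/\alpha$, which forces $\nu_{\tau_{\boost}}^{\mathrm{inv}}\geq 1/\beta$; here $\delta^{\boost}=0$ again requires $b_{\tau_{\boost}}\lambda_{\tau_{\boost}}\Lambda_{\tau_{\boost}-1}^{\boost}\leq 1/\alpha$, and the same product identity makes the untruncated inverse exceed $1/\beta$, closing the coupling in this boundary regime as well.
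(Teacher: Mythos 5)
Your proposal is correct and follows essentially the same route as the paper's proof: apply Ville's inequality separately to the two boosted test supermartingales (under $\mathbb{P}_0$ and $\mathbb{P}_1$, respectively), then establish the containment of the "stop for futility" event in the "inverse process crosses $1/\beta$" event via the algebraic relation $\prod b_i\lambda_i \cdot \prod b_i^{\mathrm{inv}}/\lambda_i = \prod b_i \prod b_i^{\mathrm{inv}}$, using the observation that neither truncation engages strictly before $\tau_{\boost}$. The paper performs this inversion directly in a chain of equivalences while you frame it as a product identity between the two boosted processes, but these are the same argument.
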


Similar to before, it follows that the system above can often be solved with equalities.

\begin{proposition}\label{prop:boosting_equal_2side}
    If $\lambda_t(X_t)$ is continuously distributed under $\mathbb{P}_0$ and $\Lambda_{t-1}^{\boost}\in (0,\alpha^{-1}), \Lambda_{t-1}^{\boost, \textrm{inv}} \in (0,\beta^{-1})$, then there exist $b_t^*$ and $b_t^{*,\textrm{inv}}$ satisfying \eqref{eq:boost_original} and \eqref{eq:boost_inverse} with equality.    
\end{proposition}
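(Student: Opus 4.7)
The plan is to condition on $\mathcal{F}_{t-1}$ (rendering $\Lambda_{t-1}^{\boost}$, $\Lambda_{t-1}^{\boost,\mathrm{inv}}$, $B_{t-1}:=\prod_{i<t}b_i$, and $B_{t-1}^{\mathrm{inv}}:=\prod_{i<t}b_i^{\mathrm{inv}}$ deterministic) and then recast the problem as a two-dimensional fixed point, using Proposition~\ref{prop:solvability_2side} as a template for the one-variable existence step. Define
\begin{align*}
F_1(b,b') &:= \mathbb{E}_0\!\left[T_\alpha(b\lambda_t(X_t);\Lambda_{t-1}^{\boost},\nu_t(b,b'))\mid\mathcal{F}_{t-1}\right],\\
F_2(b,b') &:= \mathbb{E}_1\!\left[T_\beta(b'/\lambda_t(X_t);\Lambda_{t-1}^{\boost,\mathrm{inv}},\nu_t^{\mathrm{inv}}(b,b'))\mid\mathcal{F}_{t-1}\right],
\end{align*}
where $\nu_t(b,b')=\min(\beta B_{t-1}B_{t-1}^{\mathrm{inv}}bb',1/\alpha)$ and $\nu_t^{\mathrm{inv}}(b,b')=\min(\alpha B_{t-1}B_{t-1}^{\mathrm{inv}}bb',1/\beta)$. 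The target is $(b,b')\in[1,\infty)^2$ with $F_1(b,b')=F_2(b,b')=1$, assuming without loss of generality $\Lambda_{t-1}^{\boost}\in(0,1/\alpha)$ and $\Lambda_{t-1}^{\boost,\mathrm{inv}}\in(0,1/\beta)$.

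First, I would establish joint continuity of $F_1$ and $F_2$ on $[1,\infty)^2$. Although $T_\alpha(\cdot;M,\nu)$ has two jump points in its first argument, the assumption that $\lambda_t(X_t)$ is continuously distributed under $\mathbb{P}_0$ (and, by mutual absolute continuity, under $\mathbb{P}_1$) makes the jump locations $\lambda_t$-a.s.\ avoided for every $(b,b')$, so dominated convergence (with the uniform bound $1/(\Lambda_{t-1}^{\boost}\alpha)$) yields continuity of both integrals.

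Next, for each fixed $b'\geq 1$, I would adapt the proof of Proposition~\ref{prop:solvability_2side} to $\phi(b):=F_1(b,b')$, showing $\phi(1)\leq \mathbb{E}_0[\lambda_t]=1$ (relaxing both the futility cut and the cap at $1/(M\alpha)$ only enlarges $T_\alpha$) and $\lim_{b\to\infty}\phi(b)=(\alpha\Lambda_{t-1}^{\boost})^{-1}>1$ (once $b$ is large enough that $\nu_t(b,b')=1/\alpha$, the middle branch of $T_\alpha$ collapses, leaving only the rejection branch, whose $\mathbb{P}_0$-probability tends to $1$). The intermediate value theorem then yields a nonempty compact solution set $R(b')\subset[1,U]$ for some $U<\infty$ depending only on $\alpha$ and $\Lambda_{t-1}^{\boost}$. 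An analogous argument applied to $F_2$ produces $R^{\mathrm{inv}}(b)\subset[1,U']$.

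Finally, I would apply Kakutani's fixed point theorem to the correspondence $(b,b')\mapsto R(b')\times R^{\mathrm{inv}}(b)$ on the compact convex box $[1,U]\times[1,U']$: joint continuity of $F_1,F_2$ gives a closed graph, nonemptiness comes from the previous step, and the sublevel-set characterization of the roots together with the intermediate value argument makes the values convex intervals. The resulting fixed point is the desired $(b_t^*,b_t^{*,\mathrm{inv}})$ at which both equations hold with equality.

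The main obstacle I expect is verifying convex-valuedness of $R(b')$ and $R^{\mathrm{inv}}(b)$, which requires monotonicity of $\phi$ in $b$ through the ``Case 1'' regime where $\nu_t<1/\alpha$; the coupling between $\nu_t$ and $b$ makes the naive monotonicity argument from Proposition~\ref{prop:solvability_2side} more delicate because increasing $b$ simultaneously shifts the futility and rejection thresholds. If monotonicity fails in a nonfatal way, I would fall back to the selections $\overline b(b'):=\max R(b')$ and $\overline b^{\mathrm{inv}}(b):=\max R^{\mathrm{inv}}(b)$, prove their upper semi-continuity from continuity of $F_1,F_2$ and apply Brouwer to a continuous regularisation---equivalently, extract the northeastern extreme point of the compact set $\{(b,b'):F_1\leq 1,\,F_2\leq 1\}\cap([1,U]\times[1,U'])$, at which continuity of $F_1,F_2$ forces both equalities.
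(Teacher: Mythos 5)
Your overall plan is on the right track and the boundary analysis you carry out is exactly the right content, but the topological tool you reach for is not the one the paper uses, and that choice introduces a genuine gap that you yourself flag. The paper invokes the Poincar\'e--Miranda theorem: once one knows (i) joint continuity of the two maps $(b,b')\mapsto F_1(b,b')$ and $(b,b')\mapsto F_2(b,b')$, and (ii) sign conditions on the faces of a box --- $F_1\leq 1$ on the face $b=1$, $F_1\geq 1$ on the face $b=U$, and symmetrically for $F_2$ in $b'$ --- the theorem directly yields a point where $F_1=F_2=1$, with no convexity or selection issues to deal with. Your intermediate-value step ``$\phi(1)\leq 1$ and $\lim_{b\to\infty}\phi(b)>1$ uniformly in $b'$'' is precisely the verification of (ii); you have done the work and then discarded it in favor of Kakutani.

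The gap in your primary argument is exactly the one you name: Kakutani requires the correspondence to be convex-valued, and neither $R(b')$ nor $R^{\mathrm{inv}}(b)$ is obviously an interval, because increasing $b$ simultaneously moves both the futility cutoff $\nu_t$ (through the product $b b'$) and the rejection cap, so $b\mapsto F_1(b,b')$ need not be monotone on the region where $\nu_t<1/\alpha$. Your first fallback (maximal selections plus Brouwer) also does not close cleanly, since maximal selections of a compact-valued correspondence are only upper semi-continuous, not continuous, and Brouwer requires a continuous map; the ``continuous regularisation'' is left unspecified. Your second fallback --- maximising, say, $b+b'$ over the feasible set $\{F_1\leq 1,\ F_2\leq 1\}\cap([1,U]\times[1,U'])$ --- can actually be made to work, but not from continuity alone: you additionally need the cross-monotonicity observation that increasing $b$ can only \emph{decrease} $F_2$ (since it enlarges $\nu_t^{\mathrm{inv}}$ and hence the region where $T_\beta$ is zeroed out), and symmetrically for $b'$ and $F_1$; with that, a strict inequality $F_1<1$ at a maximiser would let you increase $b$ while staying feasible, a contradiction. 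So the idea is salvageable, but it requires an extra structural fact you did not state, whereas Poincar\'e--Miranda avoids the issue entirely and is the shortest route from the boundary conditions you already established to the conclusion.
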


Note that even if \eqref{eq:boost_original} and \eqref{eq:boost_inverse} are satisfied with an equality, the error control can still be slightly conservative, since it may happen that 
$$
\beta \prod_{i=1}^t b_i \prod_{i=1}^t b_i^{\mathrm{inv}} > 1/\alpha, \ \text{ or equivalently, } \ \alpha \prod_{i=1}^t b_i \prod_{i=1}^t b_i^{\mathrm{inv}} > 1/\beta,
$$
which means that the rejection region of the boosted SPRT and the boosted inverse SPRT overlap. This is why we set $\nu_t=1/\alpha$ and $\nu_t^{\mathrm{inv}}=1/\beta$ in this case which ensures valid type I and type II error control but avoids the simultaneous acceptance and rejection of both hypotheses (in general, we can choose smaller $\nu_t$ and $\nu_t^{\textrm{inv}}$ than given on the right-hand side of Equations~\eqref{eq:boost_original} and \eqref{eq:boost_inverse} without violating error control). However, this conservatism is expected to be small, particularly if $\alpha$ and $\beta$ are small.  Next, we show that the boosted two-sided SPRT always beats Wald's conservative SPRT in terms of sample size.

\begin{proposition}\label{prop:improve_cons_SPRT}
Let the boosting factors $b_1, b_2,\ldots$ and inverse boosting factors $b_1^{\mathrm{inv}}, b_2^{\mathrm{inv}},\ldots$ be chosen such that \eqref{eq:boost_original} and \eqref{eq:boost_inverse} are fulfilled for each $t\in \mathbb{N}$. Furthermore, let $\tau_{\boost}=\inf\{t\in \mathbb{N}: \Lambda_{t}^{\boost}\geq 1/\alpha \lor \Lambda_{t}^{\boost}\leq \nu_t\}$ be stopping time of the boosted SPRT and $\tau_{\textrm{cons}}=\inf\{t\in \mathbb{N}: \Lambda_{t}\geq 1/\alpha \lor \Lambda_{t}\leq \beta\}$ of Wald's conservative SPRT. Then $\tau_{\boost}\leq \tau_{\textrm{cons}}$.
\end{proposition}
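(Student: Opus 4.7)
My plan is to argue by contradiction pathwise: fix $\omega$ (the realization), set $t = \tau_{\textrm{cons}}(\omega)$ (assume $t < \infty$, otherwise the conclusion is trivial), and assume that $\tau_{\boost}(\omega) > t$. The goal is then to derive a contradiction at time $t$ by computing $\Lambda_t^{\boost}$ explicitly and comparing it with the thresholds $1/\alpha$ and $\nu_t$.

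The first step is a structural observation about the truncation $T_\alpha(\,\cdot\,; M, \nu)$ defined in \eqref{eq:truncation_futility}: it outputs a value in the open interval $(\nu/M, 1/(M\alpha))$ only in its middle branch, since the ``below'' branch outputs $0$ and the ``above'' branch caps at exactly $1/(M\alpha)$. Combined with $\tau_{\boost} > t$, which forces $\Lambda_i^{\boost} \in (\nu_i, 1/\alpha)$ for every $i \leq t$, this means that each factor must fall in the middle branch, i.e.\ $T_\alpha(b_i \lambda_i(X_i); \Lambda_{i-1}^{\boost}, \nu_i) = b_i \lambda_i(X_i)$. A straightforward induction then yields
\begin{equation*}
\Lambda_i^{\boost} \;=\; \Lambda_i \prod_{j=1}^{i} b_j \qquad \text{for all } i \leq t,
\end{equation*}
using $\Lambda_0^{\boost} = 1$ at the base and $\Lambda_i = \prod_{j=1}^i \lambda_j(X_j)$.

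The second step applies this identity at $i=t$ and splits on how Wald's conservative SPRT stopped. If $\Lambda_t \geq 1/\alpha$ (rejection), then $\Lambda_t^{\boost} \geq (1/\alpha)\prod_{j=1}^t b_j \geq 1/\alpha$ since each $b_j \geq 1$; this means the boosted SPRT rejects by time $t$, contradicting $\tau_{\boost} > t$. If instead $\Lambda_t \leq \beta$ (futility), then $\Lambda_t^{\boost} \leq \beta \prod_{j=1}^t b_j \leq \beta \prod_{j=1}^t b_j \prod_{j=1}^t b_j^{\mathrm{inv}}$ using $b_j^{\mathrm{inv}} \geq 1$. If $\beta\prod b_j \prod b_j^{\mathrm{inv}} \leq 1/\alpha$, then $\nu_t$ equals this product and $\Lambda_t^{\boost} \leq \nu_t$ directly; otherwise $\nu_t = 1/\alpha$, and since by assumption $\Lambda_t^{\boost} < 1/\alpha$ we again conclude $\Lambda_t^{\boost} \leq \nu_t$. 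Either way the boosted SPRT stops for futility at time $t$ or earlier, contradicting $\tau_{\boost} > t$.

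There is no significant obstacle here; the only delicate point is the second subcase of the futility branch, which requires using the strict inequality $\Lambda_t^{\boost} < 1/\alpha$ supplied by the contradiction hypothesis to handle the situation $\nu_t = 1/\alpha$ where rejection and futility regions coincide. The argument is entirely pathwise, so no appeal to expectations or to Ville's inequality is needed beyond what is encoded in the definitions of the boosted process and $\nu_t$.
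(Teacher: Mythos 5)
Your proof is correct and follows the same plan as the paper's: compare $\Lambda_t^{\boost}$ with $\Lambda_t$ via the factor $\prod_j b_j \geq 1$, then split on whether the conservative SPRT rejected ($\Lambda_t \geq 1/\alpha$) or stopped for futility ($\Lambda_t \leq \beta$), using $b_j^{\mathrm{inv}} \geq 1$ in the latter case and treating $\nu_t = 1/\alpha$ as a separate subcase. The one genuine difference is framing: the paper asserts directly that $\Lambda_t \geq 1/\alpha$ ``immediately'' gives $\Lambda_t^{\boost} \geq 1/\alpha$, which taken at face value is false when the boosted process has already been truncated to $0$ at an earlier time -- it holds only because in that scenario $\tau_{\boost} < t$ already. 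Your contradiction setup ($\tau_{\boost} > t$, hence every prior step used the middle branch, hence $\Lambda_i^{\boost} = \Lambda_i \prod_{j\le i} b_j$ for $i \le t$) makes this implicit step explicit and avoids the ambiguity; it is the more careful write-up of the same argument rather than a genuinely different route.
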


\begin{figure}[h!]
\centering
\includegraphics[width=\textwidth]{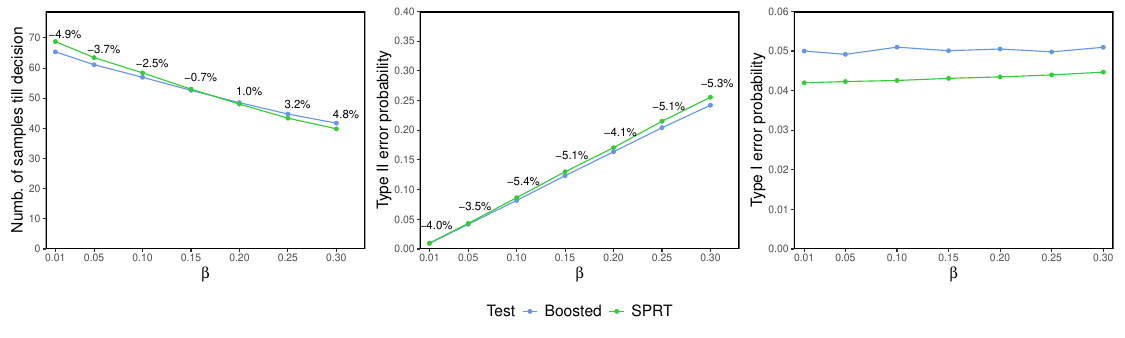}
\caption{Comparison of sample size, type II and type I error probability between  Wald's approximate SPRT, Wald's conservative SPRT and the boosted SPRT in a simple Gaussian testing setup for different desired type II error probabilities ($\beta$)  and $\alpha=0.05$. Wald's conservative SPRT and our boosted method provably control the type I and type II error, while the SPRT with Wald's approximated thresholds does not. Furthermore, the boosted SPRT saves sample size and nearly exhausts the error probabilities. \label{fig:sim_futility} }\end{figure}

Note that it is not possible to generally improve Wald's approximate SPRT, since it does not provide valid type I and type II error control. However, in Figure~\ref{fig:sim_futility} we compare the sample size, the type II and the type I error probability of the SPRT with Wald's approximated and conservative thresholds to our boosted method for different parameters $\beta$ in the same Gaussian simulation setup as considered in Section~\ref{sec:simple_null_alt}. The desired type I error probability was set to $\alpha=0.05$ and the strength of the signal to $\delta=0.3$. The boosted method reduced the required sample size in all cases. Furthermore, the boosted SPRT nearly exhausts the type I error and type II error probability, while even Wald's approximate SPRT is conservative in this case. For all $\beta$, one boosted SPRT, including the calculation of the boosting factors, needed less than $0.3$ seconds to compile on average. In Supplementary Material~\ref{appn:sims} we repeat this experiment for $\alpha=0.01$ instead of $\alpha=0.05$, which leads to similar results. In Supplementary Material~\ref{sec:implementation_2side}, we give more details on our implementation of this boosted two-sided SPRT.

\begin{remark}
    In order to avoid the simultaneous solving for $b_t$ and $b_t^{\mathrm{inv}}$, one could also just set $\nu_t'=\min\left(\beta \prod_{i=1}^t b_i, 1/\alpha\right)$ and choose $b_t\geq 1$ as large as possible such that 
    $$
    \mathbb{E}_0\left[T_{\alpha}(b_t \lambda_t(X_t) ; \Lambda_{t-1}^{\boost}, \nu_t')\Big\vert \mathcal{F}_{t-1} \right] \leq 1.
    $$
    Since $\nu_t'\leq \nu_t$ for $\nu_t$ in \eqref{eq:boost_original}, this would lead to valid type I and type II error control. However, the type II error control would usually be more conservative than with the proposed approach.
\end{remark}

\subsection{Type I,II error control for one-sided composite hypotheses}

Consider a one-sided testing problem similar to that in Section~\ref{sec:comp_null}, but with a separation between the null hypothesis and alternative
\begin{align}
H_0: \theta^* \leq \theta_0 \quad \text{vs.} \quad H_1:\theta^* \geq \theta_1 \quad (\theta_0<\theta_1). \label{eq:comp_null_sep}
\end{align}
If the monotone likelihood ratio property \eqref{eq:monotone_LR} is satisfied, we can ensure type I and type II error control by using the boosted SPRT of $\theta_1$ vs. $\theta_0$.

\begin{proposition}\label{prop:type_I_II_two_sided}
    Consider the one-sided testing problem with separation \eqref{eq:comp_null_sep} under the monotone likelihood ratio property \eqref{eq:monotone_LR}. Then the boosted SPRT $\delta^{\boost}$ described in Theorem~\ref{theo:boosting_type_I_II} with $\mathbb{P}_1=\mathbb{P}_{\theta_1}$ and $\mathbb{P}_0=\mathbb{P}_{\theta_0}$ provides type I error control for all $\theta\leq \theta_0$ and type II error control for all $\theta\geq \theta_1$.
\end{proposition}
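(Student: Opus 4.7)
The plan is to reduce the composite problem to the simple two-point SPRT by showing that the one-step boosting inequalities \eqref{eq:boost_original} and \eqref{eq:boost_inverse}, which are imposed only at the boundary distributions $\mathbb{P}_{\theta_0}$ and $\mathbb{P}_{\theta_1}$, automatically carry over to every $\theta\leq\theta_0$ and every $\theta\geq\theta_1$, respectively. Once that is done, $(\Lambda_t^{\boost})$ is a test supermartingale under every $\mathbb{P}_\theta$ with $\theta\leq\theta_0$ and $(\Lambda_t^{\boost,\mathrm{inv}})$ is a test supermartingale under every $\mathbb{P}_\theta$ with $\theta\geq\theta_1$, so the exact same Ville-based argument used in Theorem~\ref{theo:boosting_type_I_II} delivers both error bounds. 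The crucial monotonicity observation is that $y\mapsto T_{\alpha}(y;M,\nu)$ from \eqref{eq:truncation_futility} is nondecreasing on $[0,\infty)$: it equals $0$ on $[0,\nu/M]$, jumps \emph{upward} to $\nu/M$ at $y=\nu/M$, grows linearly up to $1/(M\alpha)$, and is constant thereafter.

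For type I error, fix $\theta\leq\theta_0$ and condition on $\mathcal{F}_{t-1}$, so that $b_t$, $\nu_t$ and $\Lambda_{t-1}^{\boost}$ are deterministic. By the MLR property \eqref{eq:monotone_LR}, $\lambda(x):=p_{\theta_1}(x)/p_{\theta_0}(x)$ is nondecreasing in $x$, so the composition $x\mapsto T_{\alpha}(b_t\lambda(x);\Lambda_{t-1}^{\boost},\nu_t)$ is nondecreasing. Again by \eqref{eq:monotone_LR}, $\mathbb{P}_\theta$ is stochastically dominated by $\mathbb{P}_{\theta_0}$, hence
\begin{equation*}
\mathbb{E}_\theta\!\left[T_{\alpha}(b_t\lambda(X_t);\Lambda_{t-1}^{\boost},\nu_t)\,\middle|\,\mathcal{F}_{t-1}\right]
\leq \mathbb{E}_{\theta_0}\!\left[T_{\alpha}(b_t\lambda(X_t);\Lambda_{t-1}^{\boost},\nu_t)\,\middle|\,\mathcal{F}_{t-1}\right]\leq 1,
\end{equation*}
where the second inequality is \eqref{eq:boost_original}. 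Thus $(\Lambda_t^{\boost})$ is a test supermartingale under $\mathbb{P}_\theta$, and Ville's inequality gives $\mathbb{P}_\theta(\delta^{\boost}=1)\leq\alpha$.

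For type II error, the argument is mirrored. Since $\lambda(x)$ is nondecreasing, $1/\lambda(x)$ is nonincreasing, so $x\mapsto T_{\alpha}(b_t^{\mathrm{inv}}/\lambda(x);\Lambda_{t-1}^{\boost,\mathrm{inv}},\nu_t^{\mathrm{inv}})$ is \emph{nonincreasing}. For any $\theta\geq\theta_1$, $\mathbb{P}_\theta$ stochastically dominates $\mathbb{P}_{\theta_1}$, and the conditional expectation of a nonincreasing function is therefore no larger under $\mathbb{P}_\theta$ than under $\mathbb{P}_{\theta_1}$, which by \eqref{eq:boost_inverse} is at most $1$. Consequently $(\Lambda_t^{\boost,\mathrm{inv}})$ is a test supermartingale under every such $\mathbb{P}_\theta$, and the very same identification used in Theorem~\ref{theo:boosting_type_I_II} — namely, that $\{\delta^{\boost}=0\}$ is contained in the event that $(\Lambda_t^{\boost,\mathrm{inv}})$ crosses $1/\beta$ — combined with Ville's inequality applied under $\mathbb{P}_\theta$ yields $\mathbb{P}_\theta(\delta^{\boost}=0)\leq\beta$.

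The main step requiring care is confirming that the discontinuity of $T_{\alpha}(\,\cdot\,;M,\nu)$ at $y=\nu/M$ does not violate monotonicity; once one notes that the jump is upward, the rest is a textbook application of MLR-based stochastic dominance followed by a direct appeal to the error-control machinery already established in Theorem~\ref{theo:boosting_type_I_II}.
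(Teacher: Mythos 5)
Your proof is correct and follows essentially the same route as the paper's, which simply states that the boosting and inverse boosting factors remain valid under all $\theta\leq\theta_0$ and $\theta\geq\theta_1$ ``in the same manner as in Proposition~\ref{prop:one_sided_null}'' and then invokes Theorem~\ref{theo:boosting_type_I_II}. You have spelled out exactly what that argument is: the MLR property gives stochastic ordering of the family, $T_\alpha(\cdot;M,\nu)$ composed with $b_t\lambda(\cdot)$ is nondecreasing (resp.\ with $b_t^{\mathrm{inv}}/\lambda(\cdot)$ is nonincreasing), so the one-step inequalities \eqref{eq:boost_original} and \eqref{eq:boost_inverse} propagate from the boundary parameters $\theta_0,\theta_1$ to the whole null and alternative, and the purely algebraic implication from $\{\Lambda_t^{\boost}\leq\nu_t\}$ to $\{\Lambda_{t'}^{\boost,\mathrm{inv}}\geq 1/\beta\}$ used in the proof of Theorem~\ref{theo:boosting_type_I_II} is distribution-free.
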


\section{Discussion\label{sec:discussion}}

\subsection{Testing a forecaster\label{sec:forecaster}}

The classical SPRT by Wald \citep{wald1945sequential, Wald1947} assumes that the null hypothesis and the alternative hypothesis are specified in advance in the form of simple distributions $\mathbb{P}_0$ and $\mathbb{P}_1$, respectively.  However, in Section~\ref{sec:comp_alt} we have seen that the SPRT remains even valid if we replace the fixed alternative $\mathbb{P}_1$ with a predictable plugin $\mathbb{P}_1^t$ that depends on the data $X_1,\ldots,X_{t-1}\in \mathcal{X}$ observed so far. In general, the null distribution $\mathbb{P}_0^t$ and the alternative distribution $\mathbb{P}_1^t$ considered at each step $t\in \mathbb{N}$ can be chosen freely based on the previous data $X_1,\ldots, X_{t-1}$ and some additional information $Z_1,\ldots,Z_t\in \mathcal{Z}$.  \citet{shafer2021testing} formulated this as a forecaster-skeptic testing protocol, where the forecaster announces a probability distribution $\mathbb{P}_0^t$ at each step $t$ and a skeptic tests the forecaster by specifying an alternative $\mathbb{P}_1^t$. For example, one can think of the forecaster as a weather forecaster, which is tried to be invalidated by a skeptic. We summarized this general testing protocol in Algorithm~\ref{alg:forecaster} (see supplement). 

The extension of our boosting technique (Algorithm~\ref{alg:general}) to the general testing problem in Algorithm~\ref{alg:forecaster} is straightforward, since at each step $t$ we only require the current martingale value $M_{t-1}=\Lambda_{t-1}$ and the distribution of the next factor $L_t=\frac{p_1^t(X_t)}{p_0^t(X_t)}$ under the null hypothesis $\mathbb{P}_0^t$, where $p_0^t$ and $p_1^t$ are the densities of $\mathbb{P}_0^t$ and $\mathbb{P}_1^t$, respectively.

\subsection{Other approaches to avoid Wald's approximated thresholds\label{sec:other_approachs}}

The problem of \enquote{overshoot} in the SPRT was already noted by \citet{wald1945sequential, Wald1947} and examined in many previous works.  
One line of work approximates the exact type I and type II error of SPRTs based on Brownian motion and renewal theory \citep{siegmund1975error, siegmund2013sequential, woodroofe1976renewal, lai1977nonlinear}. This approach was particularly driven by Siegmund, who gives a comprehensive overview in his book \citep{siegmund2013sequential}. The provided approximations are asymptotically (for $\alpha\to 0$ and $\beta \to 0$) correct under certain assumptions. In Supplementary Material~\ref{sec:siegmund}, we compare our boosted SPRT to the SPRT with Siegmund's approximations in a simple Gaussian setting (indeed, this is the only setting where the application of Siegmund's approximations seems to be straightforward), which shows that the latter method requires even less sample size. Hence, if Siegmund's approximations apply and are reliable, we recommend their use. However, the advantage of our proposed boosting technique is its much broader applicability.  For example, Siegmund's approximations cannot be applied to the one-sided SPRT without separation (Section~\ref{sec:comp_alt}) or any of the applications provided in Section~\ref{sec:extensions}, since the employed test statistics are not independent (if $(\delta_t)_{t\in \mathbb{N}}$ and $(\kappa_t)_{t\in \mathbb{N}}$ are predictable in Sections~\ref{sec:example_CS} and \ref{sec:conformal_martingales}). In contrast, we do not make any separation or independence assumptions. 
Also note that in contrast to Siegmund's approximations our boosting technique provides non-asymptotic validity.

Another approach is to truncate the SPRT by prespecifying a maximum number of observations $N$. For simple distributions of the likelihood ratio and sufficiently small $N$ it is then possible to calculate the type I and type II error for given thresholds exactly \citep{armitage1969repeated}, which can be used to find appropriate bounds for the SPRT. Such an approach is frequently employed in group sequential trials \citep{jennison1999group, wassmer2016group}. For more complex distributions one could approximate the error probabilities using simulations. This often leads to considerable computational effort, which sometimes can be reduced using importance sampling \citep{siegmund1976importance}.

One limitation all the above approaches have in common is that the test statistics and their distributions must be specified in advance. This is not possible in the general forecaster-skeptic testing protocol described in the previous section. Hence, while our boosting approach is applicable to such complex and adaptive testing protocols, all of the aforementioned methods fail. Furthermore, after calculating a boosting factor $b_t$ with our method, one can always plug it back into the expectation (\eqref{eq:boosting_inequality} in the one-sided or \eqref{eq:boost_original} and \eqref{eq:boost_inverse} in the two-sided case) and only accept the boosting factor if the expectation is indeed bounded by $1$. If a boosting factor cannot be accepted, one can try to calculate another boosting factor or set $b_t=1$ if no greater boosting factor can be found. In this way, one can always ensure that the resulting test is conservative, meaning that it provides the desired type I and type II error. This is neither possible with Siegmund's approach nor when using simulations.  

Moreover, our approach provides a test supermartingale, which yields an e-value at stopping times \citep{ramdas2023game, grunwald2020safe, shafer2021testing}. For example, this can be important for optional continuation \citep{grunwald2020safe, shafer2021testing}, combining the evidence with other test results \citep{vovk2021values} and in multiple testing \citep{xu2025bringing, goeman2025partitioning}.

\section{Summary}

We introduced a general approach to uniformly improve nonnegative (super)martingale based sequential tests by avoiding an overshoot at the upper threshold $1/\alpha$ (where $\alpha$ is the desired type I error). We illustrated its application by uniformly improving the conservative power-one SPRT for one-sided null and alternative hypotheses. We also obtained an alternative to the two-sided SPRT with Wald's approximated thresholds, where our alternative provides valid type I and type II error control while often reducing the sample size. Our sequential tests can also be used to tighten confidence sequences and apply in nontrivial settings like sampling without replacement and conformal martingales.

 
 
 



\subsection*{Acknowledgments}
 LF acknowledges funding by the Deutsche Forschungsgemeinschaft (DFG, German Research Foundation) – Project number 281474342/GRK2224/2. AR was funded by NSF grant DMS-2310718. We are grateful to Wouter M. Koolen for pointing out the possibility of using an inverted SPRT to control the type II error, and to Nick Koning for mentioning the log-optimality of boosting factors. In addition, we would like to thank Yajun Mei, Qunzhi Xu, Venugopal Veeravalli and Georgios Fellouris for useful discussions.

\bibliography{main}
\bibliographystyle{plainnat}

\clearpage

\begin{center}
    {\LARGE \bfseries Supplementary material for ``Improving Wald's (approximate) sequential probability ratio test by avoiding overshoot''} \\[1ex] %
\end{center}

\renewcommand{\thesection}{S.\arabic{section}}
\renewcommand{\theequation}{S.\arabic{equation}}
\renewcommand{\thetheorem}{S.\arabic{theorem}}
\renewcommand{\theproposition}{S.\arabic{proposition}}
\renewcommand{\thecorol}{S.\arabic{corol}}
\renewcommand{\thelemma}{S.\arabic{lemma}}
\renewcommand{\theremark}{S.\arabic{remark}}
\renewcommand{\theexample}{S.\arabic{example}}
\renewcommand{\thefigure}{S.\arabic{figure}}
\renewcommand{\thetable}{S.\arabic{table}}
\renewcommand{\thealgorithm}{S.\arabic{algorithm}}

\setcounter{section}{0}
\setcounter{equation}{0}
\setcounter{theorem}{0}
\setcounter{remark}{0}
\setcounter{example}{0}
\setcounter{figure}{0}
\setcounter{table}{0}
\setcounter{algorithm}{0}

\section{Boosting a confidence sequence for Gaussian means\label{sec:example_CS}}

    Suppose we desire a CS for the mean $\mu^*$ of i.i.d. Gaussians $X_1,X_2,\ldots$ with unit variance. For each $\mu\in \mathbb{R}$, we need to determine a sequential test for
 $
 H_0^{\mu}:\mu^*=\mu.  
 $
 Defining $Z_t=X_t-\mu$, note that each $Z_t$ is standard normally distributed under $H_0^{\mu}$. Thus, 
$
\mathbb{E}_{\mu}[\exp(\delta Z_t - \delta^2/2)|\mathcal{F}_{t-1}]= 1,
$
where $\delta>0$ is any constant. Hence, 
$$
M_t^{\mu}=\prod_{i=1}^t \exp(\delta Z_i - \delta^2/2)
$$
is a test martingale for $H_0^{\mu}$. Further, $M_t^{\mu}$ is decreasing in $\mu$ for each $t$, so the family of test martingales $((M_t^{\mu}))_{\mu\in \mathbb{R}}$ can be used to find a lower bound $l_t$ for $\mu^*$. Indeed, we can calculate
\begin{align*}
   M_t^{\mu}< 1/\alpha  \Leftrightarrow \prod_{i=1}^t \exp(\delta (X_i-\mu) - \delta^2/2) < 1/\alpha \Leftrightarrow \frac{1}{t} \sum_{i=1}^t X_i-\frac{\log(1/\alpha)}{n\delta} -\delta/2 < \mu.
\end{align*}
Thus, $(l_t)_{t\in \mathbb{N}}$, where $l_t=\frac{1}{t} \sum_{i=1}^t X_i-\frac{\log(1/\alpha)}{n\delta} -\delta/2$, is a valid lower confidence sequence for $\mu^*$. Setting $Z_t=\mu-X_t$ gives an upper bound, yielding the $(1-\alpha)$ confidence sequence
\begin{align}
\frac{1}{t} \sum_{i=1}^t X_i\ \pm \ \left[\frac{\log(2/\alpha)}{t \delta} + \delta/2\right]. \label{eq:howard}
\end{align}
This CS was already proposed by Robbins and colleagues \citep{darling1967confidence, robbins1970statistical}.
One can then obtain a boosted confidence sequence by boosting each $(M_t^{\mu})$, $\mu\in \mathbb{R}$, separately and choosing the smallest/largest $\mu\in \mathbb{R}$ such that $M_t^{\mu,\boost}< 1/\alpha$ for the lower/upper bound. In Figure~\ref{fig:sim_CS} we compare the $(1-\alpha)$ lower confidence bound proposed by Robbins with our boosted improvement for $\mu^*=2$ (the bounds were obtained by averaging over $100$ trials), which shows that boosting can tighten the CS \eqref{eq:howard} significantly. We chose $\delta=\sqrt{8\log(1/\alpha)/n}$ for $n=50$, as recommended for obtaining a tight CS at time $n$ \citep{howard2020time}.

 \begin{remark}
     We just assumed a constant $\delta>0$ for simplicity. In general, it is possible to choose a predictable sequence $(\delta_t)_{t\in \mathbb{N}}$ and set $M_t^{\mu}=\prod_{i=1}^t \exp(\delta_i Z_i-\delta_i^2/2)$. For specific  $(\delta_t)_{t\in \mathbb{N}}$, this yields CSs with the desirable property that their width $u_t-l_t$ shrinks to $0$ \citep{waudby2023estimating}. Our boosting approach can be applied to such predictable plugin CSs in the exact same manner as above (see also  Section~\ref{sec:comp_alt}). 
 \end{remark}

\begin{remark}
     Note that Robbins proved that the CS \eqref{eq:howard} is valid for all 1-sub-Gaussian distributions. However, our boosting approach only ensures valid error control for Gaussian distributions with variance one. The reason is that $$\mathbb{E}_0[L_t'|\mathcal{F}_{t-1}] \leq \mathbb{E}_0[L_t|\mathcal{F}_{t-1}] \centernot\implies \mathbb{E}_0[T_t(b_tL_t';M_{t-1}^{\boost})|\mathcal{F}_{t-1}] \leq \mathbb{E}_0[T_t(b_tL_t; M_{t-1}^{\boost})|\mathcal{F}_{t-1}]$$
for $b_t> 1$ and therefore it is not possible to handle all 1-sub-Gaussian distributions simultaneously.  
However, we think it is interesting that it is possible to exploit the information of the observations being exactly Gaussian (instead of just 1-sub-Gaussian), since previous works on Gaussian or sub-Gaussian CSs did not make use of it \citep{darling1967confidence, robbins1970statistical, howard2021time}.
\end{remark}

\begin{figure}[h!]
\centering
\includegraphics[width=0.7\textwidth]{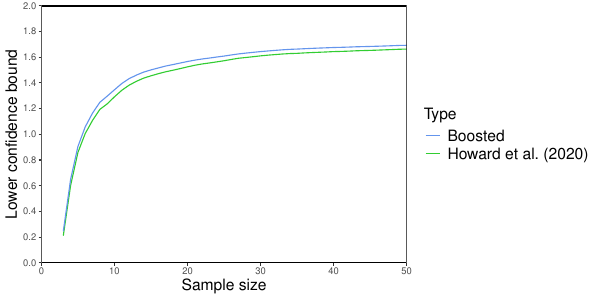}
\caption{Comparison of the lower confidence bound \eqref{eq:howard} by \citet{howard2020time} with its boosted improvement for $\mu^*=2$. Boosting improves the CS significantly. \label{fig:sim_CS} }\end{figure}

\section{Additional simulation results\label{appn:sims}}
In this section, we provide some additional simulation result. In Figure~\ref{fig:sim_simple_small_alpha} we repeated the simulated experiment for simple hypotheses from Figure~\ref{fig:sim_simple} but with $\alpha=0.01$ instead of $\alpha=0.05$. Obviously, the required sample size is larger for $\alpha=0.01$ than for $\alpha=0.05$. It can also be seen that the gain in percentage of sample size is smaller than for $\alpha=0.05$, however, due to the larger total sample size the absolute gain in sample size remains similar. 

\begin{figure}[h!]
\centering
\includegraphics[width=\textwidth]{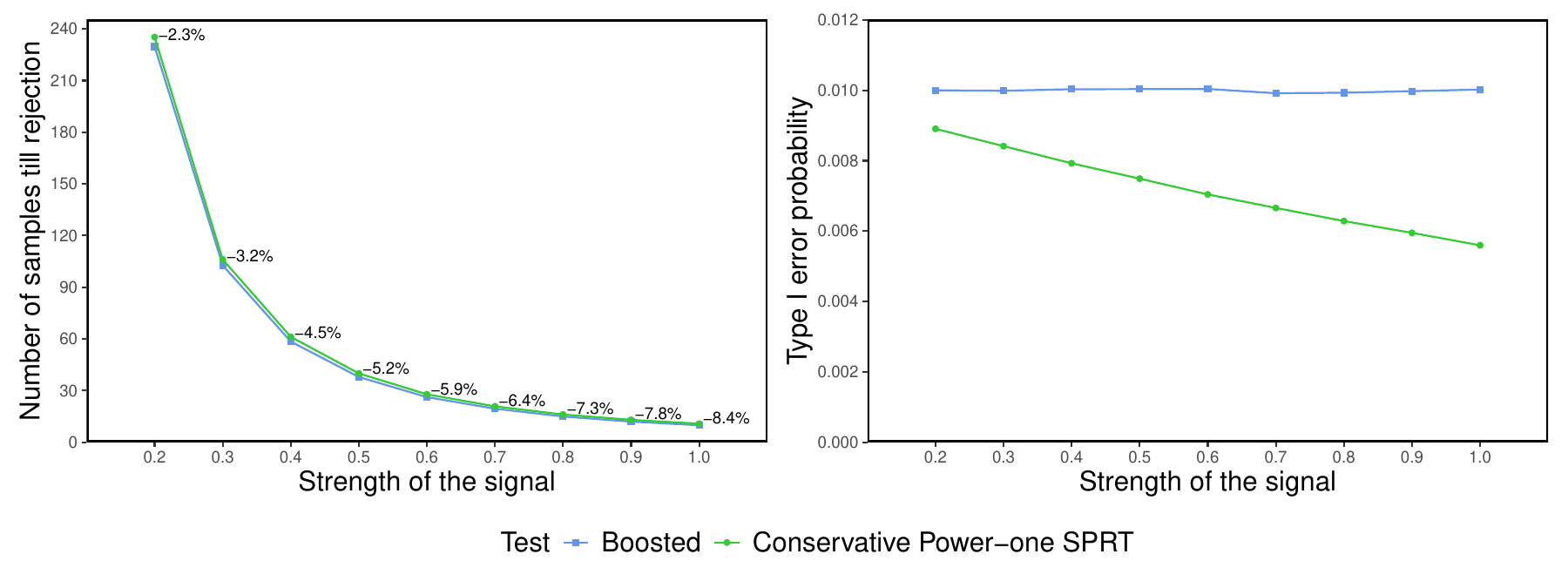}
\caption{Comparison of sample size and type I error between the SPRT and the boosted SPRT in a simple Gaussian testing setup for different signal strengths and $\alpha=0.01$. The boosted SPRT saves $2.3-8.4\%$ of sample size compared to the SPRT and exhausts the significance level. The SPRT is becoming increasingly conservative for stronger signals.  \label{fig:sim_simple_small_alpha} }\end{figure}

In Figure~\ref{fig:sim_futility_small_alpha}, we repeated the two-sided experiment from Figure~\ref{fig:sim_futility} for $\alpha=0.01$. Again, the boosted SPRT saves a substantial number of samples compared to Wald's two-sided SPRT with approximated thresholds and nearly exhausts the type I and type II error. Interestingly, the percentage of saved samples approximately increases with $\beta$ for $\alpha=0.01$, while it decreases for $\alpha=0.05$. 

\begin{figure}[h!]
\centering
\includegraphics[width=\textwidth]{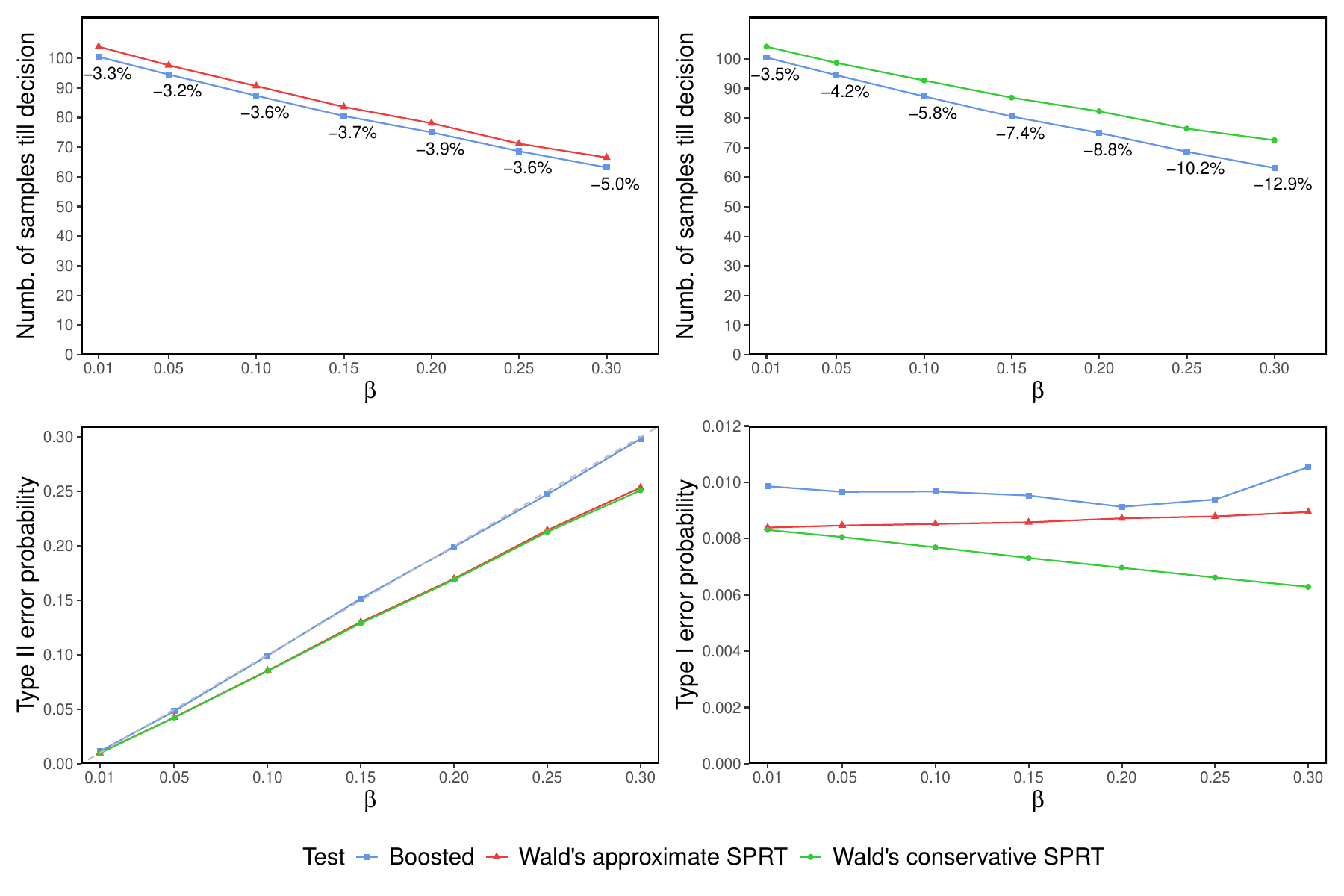}
\caption{Comparison of sample size, type II and type I error probability between the Wald's approximated SPRT, Wald's conservative SPRT and the boosted SPRT in a simple Gaussian testing setup for different desired type II error probabilities ($\beta$)  and $\alpha=0.01$. Wald's conservative SPRT and our boosted method provably control the type I and type II error, while the SPRT with Wald's approximated thresholds does not. Furthermore, the boosted SPRT saves sample size and nearly exhausts the error probabilities. \label{fig:sim_futility_small_alpha} }\end{figure}

\section{Boosting with a stop for futility and discrete data\label{sec:boosting_discrete}}

In Section~\ref{sec:two_sided_typeI}, we have shown that with discrete data and two-sided thresholds it might not be possible to find boosting factors such that the expected value of the boosted (super)martingale factors is exactly one. In this section, we propose a randomization approach that can be used to avoid conservative type I error control in these cases.

First, at each step $t$ choose the largest $b_t^*\geq 1$ among all $b_t\geq 1$ that ensures 
$$\Ex_{0}[T_{\alpha}(b_t^* L_t;M_{t-1}^{\mathrm{boost}}, \nu_t)|\mathcal{F}_{t-1}]\leq 1.$$ 
Note that such a $b_t^*$ always exists, since $T_{\alpha}(x;M_{t-1}^{\mathrm{boost}}, \nu_t)$ is left-continuous and nondecreasing in $x$. Then, in case of $b_t^* L_t M_{t-1}^{\mathrm{boost}}\leq \nu_t$, draw a random sample $U\sim [0,1]$ and reject $H_0$ if $U\leq \alpha a_t M_{t-1}^{\mathrm{boost}}$, where
\begin{align*}
a_t= \frac{1-\Ex_{0}[T_{\alpha}(b_t^* L_t;M_{t-1}^{\mathrm{boost}}, \nu_t)|\mathcal{F}_{t-1}]}{\mathbb{P}_0(b_t^*L_tM_{t-1}^{\mathrm{boost}}\leq \nu_t|\mathcal{F}_{t-1}) }. 
\end{align*}
Hence, even if our boosted process $(M_t^{\boost})$ crosses the lower threshold $\nu_t$, with this approach we still have a positive probability for rejection if $\Ex_{0}[T_{\alpha}(b_t^* L_t;M_{t-1}^{\mathrm{boost}}, \nu_t)|\mathcal{F}_{t-1}]$ is strictly smaller than one. 

To see why this controls the type I error, we need the randomized Ville's inequality \citep{ramdas2023randomized}, a randomized improvement of Ville's inequality \eqref{eq:ville}. It states that for any test supermartingale $(M_t)$ with respect to the filtration $(\mathcal{F}_t)$, we have that
$\mathbb{P}_0(M_{\tau}\geq U/\alpha)\leq \alpha$ for all stopping times, where $U$ is uniform on $[0,1]$ and independent of $(\mathcal{F}_t)$. Now define
\begin{align*}
 \tilde{T}(b_t^* L_t;M_{t-1}^{\mathrm{boost}}, \nu_t)\coloneqq a_t\mathbbm{1}\{b_t^*L_tM_{t-1}^{\mathrm{boost}}\leq \nu_t\}+T_{\alpha}(b_t^* L_t;M_{t-1}^{\mathrm{boost}}, \nu_t).
\end{align*}
Then $\Ex_{0}[\tilde{T}_{\alpha}(b_t^* L_t;M_{t-1}^{\mathrm{boost}}, \nu_t)|\mathcal{F}_{t-1}]= 1$. Now if we use $\tilde{T}_{\alpha}(b_t^* L_t;M_{t-1}^{\mathrm{boost}}, \nu_t)$ instead of $T_{\alpha}(b_t^* L_t;M_{t-1}^{\mathrm{boost}}, \nu_t)$ as factor for our boosted test martingale, the test martingale has the same value if we did not stop for futility but it equals $a_t M_{t-1}^{\mathrm{boost}}$ (instead of $0$) in case of $b_t^*L_tM_{t-1}^{\mathrm{boost}}\leq \nu_t$. Hence, due to randomized Ville's inequality \citep{ramdas2023randomized}, after stopping in case of $b_t^*L_tM_{t-1}^{\mathrm{boost}}\leq \nu_t$, we can reject the null hypothesis if $U\leq \alpha a_t M_{t-1}^{\mathrm{boost}}$ while ensuring valid type I error control.

\begin{example}[Continuation of Example~\ref{example:binary}]
    Suppose we have i.i.d. binary data $X_1,X_2,\ldots$ with $\mathbb{E}_0[X_t]=1/3$, $\mathbb{E}_1[X_t]=2/3$, $\alpha=0.05$, $\Lambda_{t-1}^{\boost}=8$ and $\nu_t=7$. Then the largest $b_t\geq 1$ satisfying $\Ex_{0}[T_{\alpha}(b_t L_t;\Lambda_{t-1}^{\mathrm{boost}}, \nu_t)|\mathcal{F}_{t-1}]\leq 1$ is given by $b_t^*=7/4$. With this, we have $\Ex_{0}[T_{\alpha}(b_t^* L_t;\Lambda_{t-1}^{\mathrm{boost}}, \nu_t)|\mathcal{F}_{t-1}]= 5/6$ and $\mathbb{P}_0(b_t^* L_t\Lambda_{t-1}^{\mathrm{boost}}\leq \nu_t|\mathcal{F}_{t-1})=2/3$. Hence, $a_t=(1-5/6)/(2/3)=1/4$. Therefore, we can reject $H_0$ in case of $X_t=0$ if $U\leq \alpha a_t \Lambda_{t-1}^{\mathrm{boost}}= 1/10$ (in case of $X_t=1$ we have $\Lambda_{t}^{\mathrm{boost}}=20$ and can reject anyway). 
\end{example}

\section{Implementation of the boosted two-sided SPRT\label{sec:implementation_2side}}

In this section we provide some details on our implementation of the two-sided boosting method proposed in Theorem~\ref{theo:boosting_type_I_II} that we, for example, used to generate Figure~\ref{fig:sim_futility}. 

According to Proposition~\ref{prop:boosting_equal_2side}, one possibility would be to just replace the inequalities in \eqref{eq:boost_original} and \eqref{eq:boost_inverse} with equalities and then use a solver for nonlinear equation systems to find the optimal boosting factors. Appropriate solvers for this purpose are provided by the \texttt{R} packages \textit{nleqslv} \citep{nleqslv} and \textit{BB} \citep{BB}. While these work most of the time, we noticed that the algorithms sometimes fail to find a solution and then produce a warning, e.g. if our current boosted likelihood ratio value $\Lambda_{t-1}^{\boost}$ is very close to $1/\alpha$. For this reason, in our implementation we used a slightly different approach by formulating the following optimization problem with nonlinear constraints
$$
\max_{b_t\geq 1, b_t^{\mathrm{inv}}\geq 1} b_t+b_t^{\textrm{inv}} \quad \text{ s.t. \eqref{eq:boost_original} and \eqref{eq:boost_inverse} are satisfied}.
$$

We solved this using the \textit{NLOPT\_LN\_COBYLA} algorithm from the \textit{nloptr} \citep{nloptr} \texttt{R} package. While it produced the same results as the previously mentioned approaches in most cases, we found it to be more reliable in extreme scenarios (e.g. when our current boosted likelihood ratio value is very close to one of the thresholds). To avoid inflated error probabilities by all means, we verify every boosting factor by plugging it into \eqref{eq:boost_original} and \eqref{eq:boost_inverse}. If one of the two expected values is greater than $1+10^{-7}$, we set both boosting factors to $1$. This ensures that we always have (sufficiently accurate) type I and type II error control, even if the optimization algorithm fails to provide valid solutions. In order to avoid conservatism by setting the boosting factors to one in those cases, one could also consider to apply the different algorithms one after the other if one fails to provide a sufficient solution. However, in our simulations we found the described proceeding to perform sufficiently well.

\section{General forecaster-skeptic testing protocol}

\begin{algorithm}
\caption{SPRT in a general forecaster-skeptic testing protocol}\label{alg:forecaster}
\hspace*{\algorithmicindent} \textbf{Input:} Lower and upper SPRT thresholds $\gamma_0$ and $\gamma_1$.\\
 \hspace*{\algorithmicindent} 
 \textbf{Output:} Decision on the forecaster.
\begin{algorithmic}[1]
\State $\Lambda_0=1$
\For{$t=1,2,\ldots$} 
\State Reality announces $Z_t\in \mathcal{Z}$
\State Forecaster announces a probability distribution $\mathbb{P}_0^t$ on $\mathcal{X}$
\State Skeptic announces a probability distribution $\mathbb{P}_1^t$ on $\mathcal{X}$
\State Reality announces $X_t\in \mathcal{X}$
\State $\Lambda_t=\Lambda_{t-1} \cdot \frac{p_1^t(X_t)}{p_0^t(X_t)}$
\If{$\Lambda_t\geq \gamma_1$}
\State Stop and reject the forecaster
\EndIf
\If{$\Lambda_t\leq \gamma_0$}
\State Stop and accept the forecaster
\EndIf
\EndFor
\end{algorithmic}
\end{algorithm}

\section{Comparison to Siegmund's SPRT\label{sec:siegmund}}

In this section, we compare our boosted SPRT to the SPRT with the approximated thresholds by \citet{siegmund2013sequential}. We consider a simple i.i.d. Gaussian testing setup $H_0:X_t\sim \mathcal{N}(0,1)$ vs. $H_1:X_t\sim \mathcal{N}(\mu_1,1)$, where $\mu_1>0$ is the strength of the signal. Siegmund's approximations suggest to choose $\gamma_1=\frac{1-\beta}{\alpha \exp(\mu_1 \rho)}$ and $\gamma_0=\frac{\beta\exp(\mu_1 \rho)}{1-\alpha}$, where $\rho=0.583$. In Figure~\ref{fig:sim_simple_siegmund}, we compare the SPRT with these approximated thresholds to our boosted SPRT in the power-one ($\beta=0$) case and in Figure~\ref{fig:sim_futility_siegmund} for different desired type II errors $\beta$. In both cases $\alpha=0.05$ and the implementation of our boosted SPRT is exactly the same as in Figures~\ref{fig:sim_simple} and \ref{fig:sim_futility}.

It is easy to see that Siegmund's SPRT outperformed our boosted SPRT in all considered cases. While the difference in the power-one scenario is hardly visible, the gain in sample size for larger $\beta$ can be substantial. Note that the type II error in Figure~\ref{fig:sim_futility_siegmund} seems slightly inflated for Siegmund's SPRT, however, this should be negligible for practice. Hence, in this particular case Siegmund's SPRT may be preferred over our boosted SPRT. However, note that Siegmund's approximations are only asymptotically valid (for $\alpha\to 0$ and $\beta\to 0$) and hence it might be sensible to check their validity by such simulations before application, particularly if $\beta$ and/or $\alpha$ is large. Furthermore, the above thresholds, in particular the value of $\rho=0.583$, only work in the Gaussian example and require (nontrivial) adjustments for other distributions. In general, it is not possible at all to use Siegmund's approximations in settings with non-i.i.d. test statistics such as considered in Sections~\ref{sec:comp_alt} and \ref{sec:WoR}.

\begin{figure}[h!]
\centering
\includegraphics[width=\textwidth]{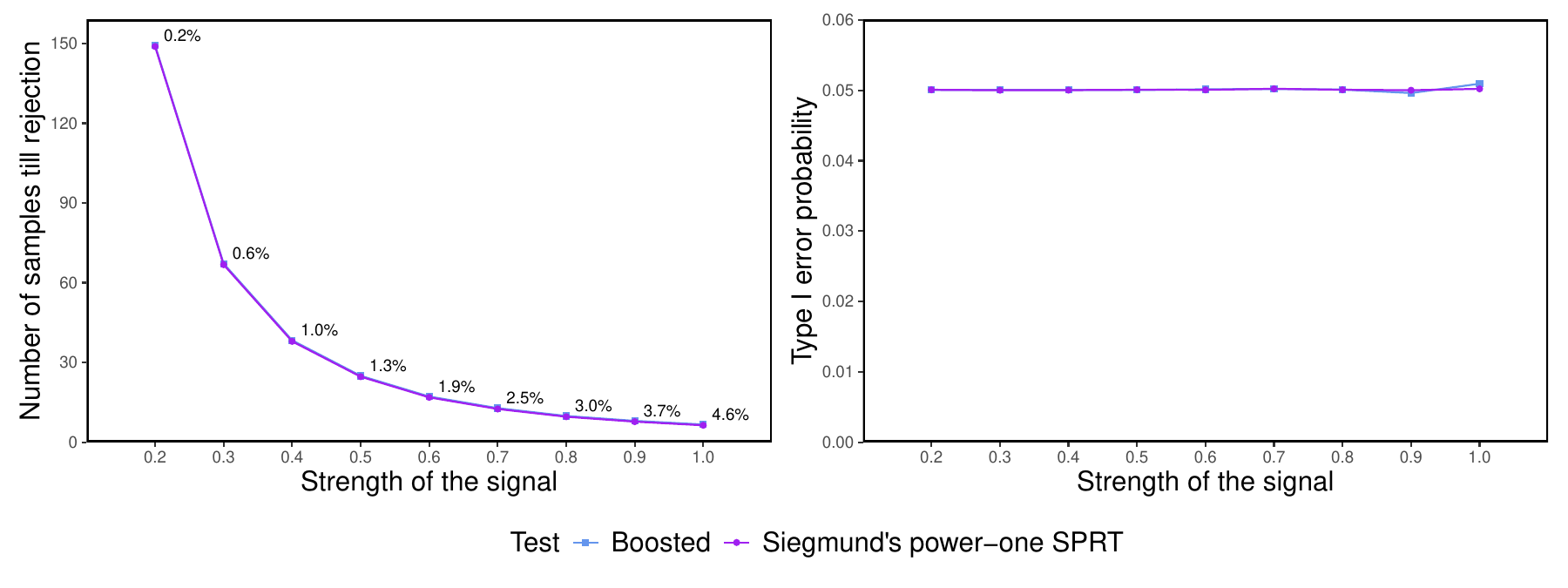}
\caption{Comparison of sample size and type I error between Siegmund's SPRT and the boosted SPRT in a simple Gaussian testing setup for different signal strengths and $\alpha=0.05$. Siegmund's SPRT saves sample size compared to the boosted SPRT, but the gain is hardly visible. Siegmund's SPRT controls the type I error asymptotically, while the boosted SPRT provides non-asymptotic control.  \label{fig:sim_simple_siegmund} }\end{figure}

\begin{figure}[h!]
\centering
\includegraphics[width=\textwidth]{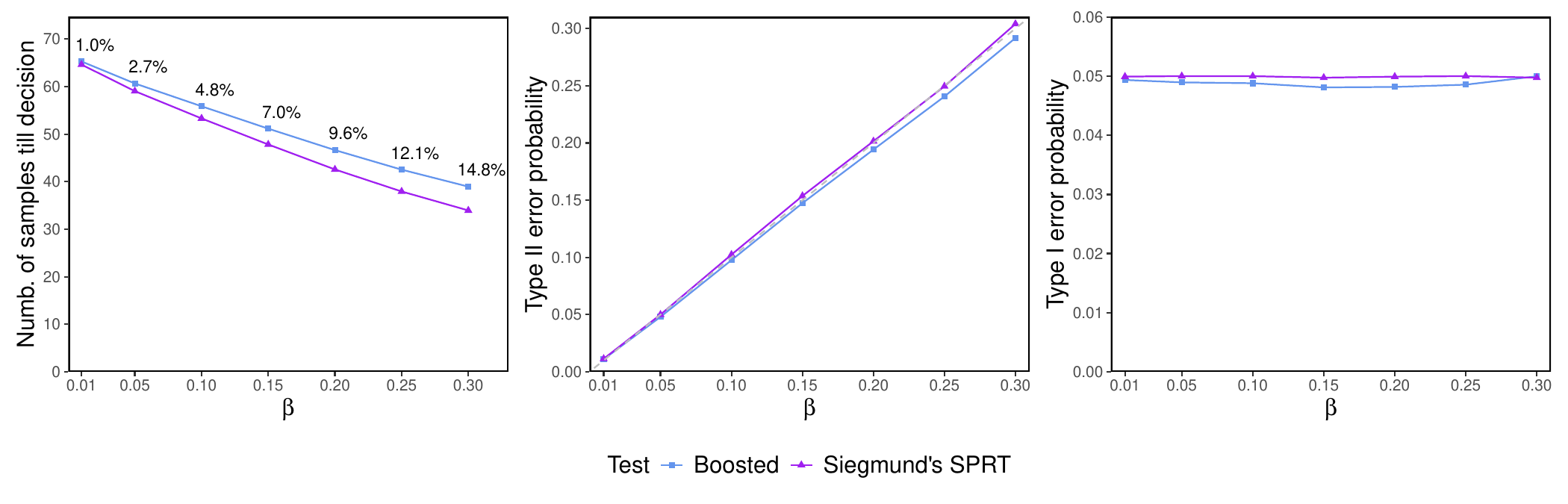}
\caption{Comparison of sample size, type II and type I error probability between the SPRT with Siegmund's approximated thresholds and the boosted SPRT in a simple Gaussian testing setup for different desired type II error probabilities ($\beta$) and $\alpha=0.05$. Siegmund's SPRT needs substantially less samples than the boosted SPRT. Siegmund's SPRT controls the error probabilities asymptotically, while the boosted SPRT provides non-asymptotic control. \label{fig:sim_futility_siegmund} }\end{figure}

\section{Proofs}
\begin{proof}[Proof of Theorem~\ref{theo:main}]
    We already noted that $\tau_{\mathrm{trunc}}=\tau_M$ 
 and $M_t^{\mathrm{trunc}}=M_t$ for all $t<\tau_M$. Since $M_t^{\mathrm{boost}}\geq M_t^{\mathrm{trunc}}$ for all $t\in \mathbb{N}$, the first claim follows. Furthermore, since $\Ex_{0}[T_{\alpha}(b_t L_t;M_{t-1}^{\mathrm{boost}})|\mathcal{F}_{t-1}]\leq 1$ for all $t\in \mathbb{N}$, it immediately follows that $(M_t^{\boost})$ is a test supermartingale.

\end{proof}

\begin{proof}[Proof of Proposition~\ref{prop:solvability}]
   First, note that $T_{\alpha}(x ; M_{t-1}^{\boost})$ is a continuous function in $x$ and upper bounded by $1/(\alpha M_{t-1}^{\boost})$. Hence, by the dominated convergence theorem we have for any $b_0\geq 1$,
   $$
   \lim\limits_{b\to b_0} \Ex_{0}[T_{\alpha}(b L_t;M_{t-1}^{\mathrm{boost}})|\mathcal{F}_{t-1}] = \Ex_{0}\left[ \lim\limits_{b\to b_0}T_{\alpha}(b L_t;M_{t-1}^{\mathrm{boost}})\bigg\vert\mathcal{F}_{t-1}\right] =\Ex_{0}\left[ T_{\alpha}(b_0 L_t;M_{t-1}^{\mathrm{boost}})|\mathcal{F}_{t-1}\right],
   $$
   implying that $b\mapsto \Ex_{0}[T_{\alpha}(b L_t;M_{t-1}^{\mathrm{boost}})|\mathcal{F}_{t-1}]$ is a continuous function as well. Since $b_t=1$ always solves \eqref{eq:boosting_inequality} and $\lim\limits_{b_t\to \infty} \mathbb{E}_{0}[T_{\alpha}(b_tL_t ; M_{t-1}^{\boost})| \mathcal{F}_{t-1}]=\mathbb{P}_0(L_t>0|\mathcal{F}_{t-1})/(M_{t-1}^{\boost} \alpha)>1$, there exists a $b_t^*$ that satisfies \eqref{eq:boosting_equality}. Now let $b_t \neq b_t^*$ be another solution to \eqref{eq:boosting_inequality}. In case of $b_t<b_t^*$, it immediately follows that $T_{\alpha}(b_t^* L_t;M_{t-1}^{\mathrm{boost}})\geq T_{\alpha}(b_t L_t;M_{t-1}^{\mathrm{boost}})$ $\mathbb{P}_0$-almost surely. If $b_t>b_t^*$, then $T_{\alpha}(b_t L_t;M_{t-1}^{\mathrm{boost}})\geq T_{\alpha}(b_t^* L_t;M_{t-1}^{\mathrm{boost}})$ $\mathbb{P}_0$-almost surely and $\Ex_{0}[T_{\alpha}(b_t L_t;M_{t-1}^{\mathrm{boost}})|\mathcal{F}_{t-1}]=1=\Ex_{0}[T_{\alpha}(b_t^* L_t;M_{t-1}^{\mathrm{boost}})|\mathcal{F}_{t-1}]$, implying that $T_{\alpha}(b_t L_t;M_{t-1}^{\mathrm{boost}})= T_{\alpha}(b_t^* L_t;M_{t-1}^{\mathrm{boost}})$ $\mathbb{P}_0$-almost surely. 
   
\end{proof}

\begin{proof}[Proof of Proposition~\ref{prop:one_sided_null}]
Due to the monotone likelihood ratio property~\eqref{eq:monotone_LR}, we know that 
\begin{align}
\mathbb{P}_{\theta_0}\left(\lambda_t(X_t)\geq x\right)\geq \mathbb{P}_{\theta}\left(\lambda_t(X_t)\geq x\right) \quad (\theta \leq \theta_0, x\geq 0), \label{eq:Karlin_rubin}
\end{align}
where $\theta_t\geq \theta_0$. This implies that
$$
\mathbb{P}_{\theta_0}\left(T_{\alpha}(b_t\lambda_t(X_t);M)\geq x\right)\geq \mathbb{P}_{\theta}\left(T_{\alpha}(b_t\lambda_t(X_t);M)\geq x\right) \quad (\theta \leq \theta_0, x\geq 0)
$$
for any $\alpha\in (0,1)$, $M>0$ and $b_t\geq 1$. Hence, if 
$
\Ex_{\theta_0}[T_{\alpha}(b_t\lambda_t(X_t);M)|\Lambda_{t-1}^{\boost}]\leq 1
$, we also have $\Ex_{\theta}[T_{\alpha}(b_t\lambda_t(X_t);M)|\Lambda_{t-1}^{\boost}]\leq 1$ for all $\theta<\theta_0$. 
\end{proof}

\begin{proof}[Proof of Theorem~\ref{theo:main_2side}]
    First, note that $M_t^{\boost}\leq \nu_t$ implies that $\prod_{i=1}^{t'} b_i L_i\leq \nu_t^M \prod_{i=1}^{t'} b_i$ for some $t'\leq t$, and hence $M_{t'}\leq \nu_{t'}^M$. Therefore, if $t<\tau_M$, the boosted martingale has not stopped for futility yet, which shows that  $M_t^{\boost}\geq M_t$ for all $t<\tau_M$ and $\delta_{\boost}\geq \delta$ in the same manner as in Theorem~\ref{theo:main}. We now show that $M_t\leq \nu_t^M$ implies that $\tau_{\boost}\leq t$. If $\nu_t=\nu_t^M\prod_{i=1}^t b_i$, then $M_t\leq \nu_t^M$ is equivalent to $M_t \prod_{i=1} b_i \leq \nu_t^M \prod_{i=1}^t b_i=\nu_t$. If $\nu_t=1/\alpha$, then $\tau_{\boost}\leq t$, since we either have $M_t^{\boost}=0$ or $M_t^{\boost}=1/\alpha$. Obviously, $M_t\geq 1/\alpha$ implies that $\tau_{\boost}\leq t$, which shows that $\tau_{\boost}\leq \tau_M$.
    Finally, \eqref{eq:boosting_inequality_2side} immediately implies that $(M_t^{\boost})$ is a test supermartingale. 
\end{proof}

\begin{proof}[Proof of Proposition~\ref{prop:solvability_2side}]

We only show that $b\mapsto \Ex_{0}[T_{\alpha}(b L_t;M_{t-1}^{\mathrm{boost}}, \nu_t)|\mathcal{F}_{t-1}]$ is a continuous mapping. The remaining proof follows in the exact same manner as in Proposition~\ref{prop:solvability}. Let $b_1\geq 1$ and $\epsilon>0$. We want to show that for $b_2>b_1$ sufficiently close to $b_1$ we have that 
$$
\Ex_{0}[T_{\alpha}(b_2 L_t;M_{t-1}^{\mathrm{boost}}, \nu_t)|\mathcal{F}_{t-1}]-\Ex_{0}[T_{\alpha}(b_1 L_t;M_{t-1}^{\mathrm{boost}}, \nu_t)|\mathcal{F}_{t-1}]\leq \epsilon.
$$
For this, we first note that $$\min(b_2x,1/[M_{t-1}^{\mathrm{boost}} \alpha])-\min(b_1x,1/[M_{t-1}^{\mathrm{boost}} \alpha]\leq \epsilon/2 \ \text{for all } x\geq 0$$ for $b_2$ sufficiently close to $b_1$. Let $f_{L_t|\mathcal{F}_{t-1}}$ be the density of $L_t|\mathcal{F}_{t-1}$ under $\mathbb{P}_0$, then
\begin{align*}
   & \Ex_{0}[T_{\alpha}(b_2 L_t;M_{t-1}^{\mathrm{boost}}, \nu_t)|\mathcal{F}_{t-1}]-\Ex_{0}[T_{\alpha}(b_1 L_t;M_{t-1}^{\mathrm{boost}}, \nu_t)|\mathcal{F}_{t-1}] \\
   &= \int\limits_{\frac{\nu_t}{b_2 M_{t-1}^{\mathrm{boost}}}}^\infty \min(b_2x,1/[M_{t-1}^{\mathrm{boost}} \alpha]) f_{L_t|\mathcal{F}_{t-1}}(x) \, dx - \int\limits_{\frac{\nu_t}{b_1 M_{t-1}^{\mathrm{boost}}}}^\infty \min(b_1x,1/[M_{t-1}^{\mathrm{boost}} \alpha]) f_{L_t|\mathcal{F}_{t-1}}(x) \, dx\\
   &\leq \int\limits_{\frac{\nu_t}{b_2 M_{t-1}^{\mathrm{boost}}}}^\infty \min(b_2x,1/[M_{t-1}^{\mathrm{boost}} \alpha]) f_{L_t|\mathcal{F}_{t-1}}(x) \, dx - \int\limits_{\frac{\nu_t}{b_1 M_{t-1}^{\mathrm{boost}}}}^\infty \min(b_2x,1/[M_{t-1}^{\mathrm{boost}} \alpha]) f_{L_t|\mathcal{F}_{t-1}}(x) \, dx + \epsilon/2\\
   &= \int\limits_{\frac{\nu_t}{b_2 M_{t-1}^{\mathrm{boost}}}}^{\frac{\nu_t}{b_1 M_{t-1}^{\mathrm{boost}}}} \min(b_2x,1/[M_{t-1}^{\mathrm{boost}} \alpha]) f_{L_t|\mathcal{F}_{t-1}}(x) \, dx + \epsilon/2 \\
   &\leq 1/[M_{t-1}^{\mathrm{boost}} \alpha] \int\limits_{\frac{\nu_t}{b_2 M_{t-1}^{\mathrm{boost}}}}^{\frac{\nu_t}{b_1 M_{t-1}^{\mathrm{boost}}}} f_{L_t|\mathcal{F}_{t-1}}(x) \, dx + \epsilon/2 \\
   &\leq \epsilon,
\end{align*}
where the last inequality holds for $b_2$ sufficiently close to $b_1$. 
\end{proof}

\begin{proof}[Proof of Theorem~\ref{theo:boosting_type_I_II}]
Ville's inequality implies that 
\begin{align}
    \mathbb{P}_0(\exists t\in \mathbb{N}: \Lambda_t^{\boost}\geq 1/\alpha) \leq \alpha \label{eq:two-sided_typeI}\\
    \mathbb{P}_1(\exists t\in \mathbb{N}: \Lambda_t^{\boost, \mathrm{inv}}\geq 1/\beta) \leq \beta.  \label{eq:two-sided_typeII}
\end{align}
\eqref{eq:two-sided_typeI} immediately implies the type I error control of $\delta^{\boost}$. To deduce type II error control of $\delta^{\boost}$ from \eqref{eq:two-sided_typeII}, it remains to show that $\Lambda_t^{\boost} \leq \nu_t$, $t\in \mathbb{N}$, implies $\Lambda_t'^{\boost, \textrm{inv}}\geq 1/\beta$ for some $t'\in \mathbb{N}$. To see this, just note that $\Lambda_t^{\boost} \leq \nu_t$ implies that there is a $t'\leq t$ such that
\begin{align*}
& \prod_{i=1}^{t'} b_i \lambda_i(X_i) \leq \beta \prod_{i=1}^t b_i \prod_{i=1}^t b_i^{\textrm{inv}} \ \land \ \prod_{i=1}^{t''} b_i \lambda_i(X_i) <1/\alpha  \forall t''< t' \\
\Leftrightarrow &   \prod_{i=1}^{t'} \frac{b_i^{\textrm{inv}}}{\lambda_i(X_i)} \geq \frac{1}{\beta} \ \land \ \alpha \prod_{i=1}^{t''} b_i \prod_{i=1}^{t''} b_i^{\mathrm{inv}} < \prod_{i=1}^{t''} \frac{b_i^{\mathrm{inv}}}{\lambda_i(X_i)}  \forall t''< t' \\
 \implies & \Lambda_{t'}^{\boost, \mathrm{inv}} \geq 1/\beta.
\end{align*}

\end{proof}

\begin{proof}[Proof of Proposition~\ref{prop:boosting_equal_2side}]
    Since the expected values in \eqref{eq:boost_original} and \eqref{eq:boost_inverse} are continuous functions of the boosting factors if $\lambda_t(X_t)$ is continuously distributed (see proof of Proposition~\ref{prop:solvability_2side} for more insights), the claim follows immediately by the Poincar\'e-Miranda theorem \citep{miranda1940osservazione}. 
\end{proof}

\begin{proof}[Proof of Proposition~\ref{prop:improve_cons_SPRT}]
    Since $b_t\geq 1$ for all $t\in \mathbb{N}$, we immediately have that $\Lambda_{t}\geq 1/\alpha$ implies $\Lambda_{t}^{\boost} \geq 1/\alpha$. Now suppose $\Lambda_{t}\leq \beta $ for some $t$. Let's first consider the case $\nu_t=\beta \prod_{i=1}^t b_i \prod_{i=1}^t b_i^{\mathrm{inv}}$. Since $b_i^{\mathrm{inv}}\geq 1$ for all $i\leq t$, $\Lambda_{t}\leq \beta $ immediately implies $\Lambda_{t}^{\boost}\leq \nu_t $. Now consider the case $\nu_t=1/\alpha$. Then $\tau_{\boost}\leq t$, since we either have $\Lambda_t^{\boost} =0$ or $\Lambda_t^{\boost} =1/\alpha$.
\end{proof}

\begin{proof}[Proof of Proposition~\ref{prop:type_I_II_two_sided}]
In the same manner as in Proposition~\ref{prop:one_sided_null}, one can show that the boosting and inverse boosting factors $b_t$ and $b_t^{\mathrm{inv}}$ derived under $\theta_0$ and $\theta_1$ are also valid for all other $\theta\leq \theta_0$ and $\theta\geq \theta_1$, respectively. With this, the claim follows immediately by Theorem~\ref{theo:boosting_type_I_II}.
\end{proof}

\end{document}